\newcommand{\N}{\mathbb{N}}
\newcommand{\Q}{\mathbb{Q}}
\newcommand{\R}{\mathbb{R}}
\newcommand{\I}{\mathbb{I}}
\newcommand{\tup}[1]{\langle #1\rangle}
\renewcommand{\cal}[1]{\mathcal{#1}}
\newcommand{\sv}{\varphi}
\renewcommand{\th}{\mathit{th}}
\newcommand{\cosd}{\mathit{cosd}}
\newcommand{\banzhaf}{\beta}
\newcommand{\attvalue}{\chi}
\newcommand{\pattvalue}{\chi^p}
\newcommand{\stateval}{\zeta}
\newcommand{\raw}[1]{\bar{#1}}
\newcommand{\ih}{\stackrel{\mathit{i.h.}}{=}}
\newcommand{\eps}{\varepsilon}
\renewcommand{\vec}[1]{\mathbf{#1}}
\newcommand{\der}[2]{\frac{\partial #1}{\partial #2}}
\newcommand{\vol}{\mathit{Vol}}
\newcommand{\pivotal}{\mathit{Piv}}
\newcommand{\antipivotal}{\mbox{{\it A-Piv}}}
\newcommand{\Piv}{\pivotal}
\newcommand{\APiv}{\antipivotal}
\newtheorem{theorem}{Theorem}[section]
\newtheorem{lemma}[theorem]{Lemma}
\newtheorem{proposition}[theorem]{Proposition}
\newtheorem{corollary}[theorem]{Corollary}
\theoremstyle{definition}
\newtheorem{example}[theorem]{Example}
\newtheorem{definition}[theorem]{Definition}
\newcommand{\kibitz}[2]{\ifnum\Comments=1{\color{#1}{\sf #2}}\fi}			%%
\newcommand{\ariel}[1]{\kibitz{red}{[Ariel: #1]}}
\newcommand{\yair}[1]{\kibitz{green!60!black}{[Yair: #1]}}							%%
\title{Influence in Classification via Cooperative Game Theory}
\author{Amit Datta, Anupam Datta, Ariel D. Procaccia and Yair Zick\\Carnegie Mellon University\\
\texttt{amitdatta,danupam@cmu.edu}\\\texttt{arielpro,yairzick@cs.cmu.edu}}
\date{}
\begin{document}
\maketitle
\begin{abstract}
A dataset has been classified by some unknown classifier into two types of points. What were the most important factors in determining the classification outcome? In this work, we employ an axiomatic approach in order to uniquely characterize an influence measure: a function that, given a set of classified points, outputs a value for each feature  corresponding to its influence in determining the classification outcome. We show that our influence measure takes on an intuitive form when the unknown classifier is linear. 
%and propose several generalizations of the influence measure to settings where points are weighted, or when alternative metrics over the dataset are provided.
Finally, we employ our influence measure in order to analyze the effects of user profiling on Google's online display advertising. 
\end{abstract}
\section{Introduction}
A recent white house report~\citep{whitehouse2014bigdata} highlights some of the major risks in the ubiquitous use of big data technologies. According to the report, one of the major issues with large scale data collection and analysis is a glaring lack of transparency. For example, a credit reporting company collects consumer data from third parties, and uses machine learning analysis to estimate individuals' credit score. On the one hand, this method is ``impartial'': an emotionless algorithm cannot be accused of being malicious (discriminatory behavior is not hard-coded). However, it is hardly transparent; indeed, it is difficult to tease out the determinants of one's credit score: it depends on the user's financial activities, age, address, the behavior of similar users and many other factors. This is a major issue: big-data analysis does not intend to discriminate, but inadvertent discrimination does occur: treating users differently based on unfair criteria (e.g. online retailers offering different discounts or goods based on place of residence or past purchases). 

In summary, big data analysis leaves users vulnerable. They may be discriminated against, and no one (including the algorithm's developers!) may even know why; what's worse, traditional methods for preserving user anonymity (e.g. by ``opting out'' of data collection) offer little protection; big data techniques allow companies to infer individuals' data based on similar users~\citep{barocas2014bigdata}. Since it is often difficult to ``pop the hood'' and understand the inner workings of classification algorithms, maintaining transparency in classification is a major challenge. In more concrete terms, transparency can be interpreted as understanding what influences the decisions of a black-box classifier. This is where our work comes in. 

Suppose that we are given a dataset $B$ of users; here, every user $\vec a \in B$ can be thought of as a vector of features (e.g. $\vec a = (\mbox{age, gender, IP address}\dots)$), where the $i$-th coordinate of $\vec a$ corresponds to the state of the $i$-th feature. Each $\vec a$ has a value $v(\vec a)$ (say, the credit score of $\vec a$). We are interested in the following question: 
{\em given a dataset $B$ of various feature vectors and their values, how influential was each feature in determining these values?} 

In more detail, given a set $N = \{1,\dots,n\}$ of features, a dataset $B$ of feature profiles, where every profile $\vec a$ has a value $v(\vec a)$, we would like to compute a measure $\phi_i(N,B,v)$ that corresponds to feature $i$'s importance in determining the labels of the points in $B$. 
We see this work as an important first step towards a concrete methodology for transparency analysis of big-data algorithms.

\paragraph{Our Contribution:} We take an axiomatic approach --- which draws heavily on cooperative game theory --- to define an influence measure. The merit of our approach lies in its independence of the underlying structure of the classification function; all we need is to collect data on its behavior.  

We show that our influence measure is the unique measure satisfying some natural properties (Section~\ref{sec:featureinfluence}). As a case study, we show that when the input values are given by a linear classifier, our influence measure has an intuitive geometric interpretation (Section~\ref{sec:linear}). Finally, we show that our axioms can be extended in order to obtain other influence measures (Section~\ref{sec:extensions}). For example, our axioms can be used to obtain a measure of {\em state influence}, as well as influence measures where a prior distribution on the data is assumed, or a measure that uses pseudo-distance between user profiles to measure influence. 

We complement our theoretical results with an implementation of our approach, which serves as a proof of concept (Section~\ref{sec:ads}). Using our framework, we identify ads where certain user features have a significant influence on whether the ad is shown to users. Our experiments show that our influence measures behave in a desirable manner. 
In particular, a Spanish language ad --- clearly biased towards Spanish speakers --- demonstrated the highest influence of any feature among all ads.

%they are able to identify ads where certain user features are important in determining the number of times the ads were presented to users with certain feature profiles.
\subsection{Related Work}
 Axiomatic characterizations have played an important role in the design of provably fair revenue divisions~\citep{shapleyvalue,shapleyyoung,banzhaf,lehrer1988axiomatization}. Indeed, one can think of the setting we describe as a generalization of cooperative games, where agents can have more than one state --- in cooperative games, agents are either present or absent from a coalition. 
Some papers extend cooperative games to settings where agents have more than one state, and define influence measures for such settings~\citep{ocfgeb,zick2014ocfarb}; however, our setting is far more general. 

Our definition of influence measures the ability of a feature to affect the classification outcome if changed (e.g. how often does a change in gender cause a change in the display frequency of an ad); this idea is used in the analysis of cause~\citep{halpern2005causes,tian2000probabilities}, and responsibility~\citep{chockler2004responsibility}; our influence measure can be seen as an application of these ideas to a classification setting. 

Influence measures are somewhat related to {\em feature selection}~\citep{blum1997selection}. 
Feature selection is the problem of finding the set of features that are most relevant to the classification task, 
in order to improve the performance of a classifier on the data; 
that is, it is the problem of finding a subset of features, 
such that if we train a classifier using just those features, 
the error rate is minimized. Some of the work on feature selection employs feature ranking methods; some even use the Shapley value as a method for selecting the most important features~\citep{cohen2005feature}. 
Our work differs from feature selection both in its objectives and its methodology. Our measures can be used in order to rank features, but we are not interested in training classifiers; rather, we wish to decide which features influence the decision of an unknown classifier. That said, one can certainly employ our methodology in order to rank features in feature selection tasks. 

When the classifier is linear, our influence measures take on a particularly intuitive interpretation as the aggregate volume between two hyperplanes~\citep{marichal2006slices}.

Recent years have seen tremendous progress on methods to enhance fairness in classification~\citep{DworkHPRZ12,kamishima2011fairness}, user privacy~\citep{Balebako'12,DBLP:conf/kdd/PedreschiRT08,Wills'12} and the prevention of  discrimination~\citep{4909197,Calders,luong2011k}. Our work can potentially inform all of these research thrusts: a classifier can be deemed fair if the influence of certain features is low; for example, high gender influence may indicate discrimination against a certain gender. In terms of privacy, if a hidden feature (i.e. one that is not part of the input to the classifier) has high influence, this indicates a possible breach of user privacy.

\section{Axiomatic Characterization}
\label{sec:featureinfluence}
We begin by briefly presenting our model. Given a set of {\em features} $N = \{1,\dots,n\}$, let $A_i$ be the set of possible {\em values}, or {\em states} that feature $i$ can take; for example, the $i$-th feature could be gender, in which case $A_i = \{\mbox{male},\mbox{female},\mbox{other}\}$. 
We are given {\em partial} outputs of a function over a dataset containing feature profiles. That is, we are given a subset $B$ of $A = \prod_{i \in N}A_i$, and a valuation $v(\vec a)$ for every $\vec a \in B$. By given, we mean that we do not know the actual structure of $v$, but we know what values it takes over the dataset $B$. 
Formally, our input is a tuple $\cal G = \tup{N,B,v}$, where $v:A\to \Q$ is a function assigning a value of $v(\vec a)$ to each data point $\vec a \in B$. We refer to $\cal G$ as the {\em dataset}. When $v(\vec a) \in \{0,1\}$ for all $\vec a \in B$, $v$ is a {\em binary classifier}. When $B = A$ and $|A_i| = 2$ for all $i \in N$, the dataset corresponds to a standard TU cooperative game~\citep{compcoopbook} (and is a simple game if $v(\vec a) \in \{0,1\}$). 

We are interested in answering the following question: {\em how influential is feature $i$?}
%Here, $N$ is the feature set, $B$ is a subset of $A = \prod_{i = 1}^n A_i$, and the classifier is $v:A \to \{0,1\}$. 
Our desired output is a measure $\phi_i(\cal G)$ that will be associated with each feature $i$. The measure $\phi_i(\cal G)$ should be a good metric of the importance of $i$ in determining the values of $v$ over $B$. 

Our goal in this section is to show that there exists a unique influence measure that satisfies certain natural axioms. We begin by describing the axioms, starting with symmetry.

 Given a dataset $\cal G  = \tup{N,B,v}$ and a bijective mapping $\sigma$ from $N$ to itself, we define $\sigma \cal G = \tup{\sigma N,\sigma B,\sigma v}$ in the natural way: $\sigma N$ has all of the features relabeled according to $\sigma$ (i.e. the index of $i$ is now $\sigma(i)$); $\sigma B$ is $\{\sigma \vec a \mid \vec a \in B\}$, and $\sigma v(\sigma \vec a) = v(\vec a)$ for all $\sigma \vec a \in \sigma B$. Given a bijective mapping $\tau:A_i \to A_i$ over the states of some feature $i \in N$, we define $\tau \cal G = \tup{N,\tau B,\tau v}$ in a similar manner. 

\begin{definition}\label{def:symmetry}

An influence measure $\phi$ satisfies the {\em feature symmetry} property if it is invariant under relabelings of features: given a dataset $\cal G = \tup{N,B,v}$ and some bijection $\sigma:N \to N$, $\phi_i(\cal G) = \phi_{\sigma(i)}(\sigma \cal G)$ for all $i \in N$. 
A influence measure $\phi$ satisfies the {\em state symmetry} property if it is invariant under relabelings of states: given a dataset $\cal G = \tup{N,B,v}$, some $i \in N$, and some bijection $\tau:A_i \to A_i$, $\phi_j(\cal G) = \phi_j(\tau\cal G)$ for all $j \in N$. Note that it is possible that $i \ne j$. A measure satisfying both state and feature symmetry is said to satisfy the {\em symmetry} axiom (Sym).
\end{definition}
Feature symmetry is a natural extension of the symmetry axiom defined for cooperative games (see e.g.~\citep{banzhaf,lehrer1988axiomatization,shapleyvalue}). However, state symmetry does not make much sense in classic cooperative games; it would translate to saying that for any set of players $S \subseteq N$ and any $j \in N$, the value of $i$ is the same if we treat $S$ as $S\setminus \{j\}$, and $S\setminus \{j\}$ as $S$. While in the context of cooperative games this is rather uninformative, we make non-trivial use of it in what follows. 

We next describe a sufficient condition for a feature to have no influence: a feature should not have any influence if it does not affect the outcome in any way. Formally, a feature $i \in N$ is a {\em dummy} if $v(\vec a) = v(\vec a_{-i},b)$ for all $\vec a \in B$, and all $b \in A_i$ such that $(\vec a_{-i},b) \in B$. 
\begin{definition}\label{def:dummy}
An influence measure $\phi$ satisfies the {\em dummy} property if $\phi_i(\cal G) = 0$ whenever $i$ is a dummy in the dataset $\cal G$.
\end{definition}
The dummy property is a standard extension of the dummy property used in value characterizations in cooperative games. However, when dealing with real datasets, it may very well be that there is no vector $\vec a \in B$ such that $(\vec a_{-i},b) \in B$; this issue is discussed further in Section~\ref{sec:conclusions}.

Cooperative game theory employs a notion of value additivity in the characterization of both the Shapley and Banzhaf values. Given two datasets $\cal G_1 = \tup{N,B,v_1},\cal G_2= \tup{N,B,v_2}$, we define $\cal G = \tup{N,A,v} = \cal G_1+ \cal G_2$ with $v(\vec a) = v_1(\vec a) + v_2(\vec a)$ for all $\vec a \in B$. 
\ariel{Didn't we restrict $v$ to be a Boolean-valued function?}
\yair{now we don't}
\begin{definition}\label{def:additivity}
An influence measure $\phi$ satisfies additivity (AD) if $\phi_i(\cal G_1+ \cal G_2) = \phi_i(\cal G_1) + \phi_i(\cal G_2)$ for any two datasets $\cal G_1 = \tup{N,B,v_1},\cal G_2 = \tup{N,B,v_2}$. 
\end{definition}
The additivity axiom is commonly used in the axiomatic analysis of revenue division in cooperative games (see~\citep{lehrer1988axiomatization,shapleyvalue}); however, it fails to capture a satisfactory notion of influence in our more general setting.
We now show that any measure that satisfies additivity, in addition to the symmetry and dummy properties, must evaluate to %<amit changed>assign a value of 
zero for all features. To show this, we first define the following simple class of datasets. 
\begin{definition}
Let $\cal U_{\vec a} = \tup{N,A,u_{\vec a}}$ be the dataset defined by the classifier $u_{\vec a}$, where $u_{\vec a}(\vec a') = 1$ if $\vec a' = \vec a$, and is 0 otherwise.
The dataset $\cal U_{\vec a}$ is referred to as the {\em singleton dataset} over $\vec a$.
\end{definition}
It is an easy exercise to show that additivity implies that for any scalar $\alpha \in \Q$, $\phi_i(\alpha\cal G) = \alpha\phi_i(\cal G)$, where the dataset $\alpha\cal G$ has the value of every point scaled by a factor of $\alpha$.
\begin{proposition}\label{prop:impossibility1}
Any influence measure that satisfies the (Sym), (D) and (AD) axioms evaluates to %<amit changed>assigns a value of 
zero for all features.
\end{proposition}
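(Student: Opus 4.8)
The plan is to use the singleton datasets $\cal U_{\vec a}$ as a spanning family and, via additivity, to reduce the statement to showing $\phi_i(\cal U_{\vec a}) = 0$ for every $\vec a$ and every feature $i$. First I would record what additivity buys us: as noted in the text it gives $\phi_i(\alpha\cal G) = \alpha\phi_i(\cal G)$ for every $\alpha \in \Q$, and more generally $\phi_i$ is linear in $v$ across any family of datasets sharing a common domain. Since for a full dataset $\cal G = \tup{N,A,v}$ one has the pointwise identity $v = \sum_{\vec a \in A} v(\vec a)\,u_{\vec a}$, this gives $\cal G = \sum_{\vec a\in A} v(\vec a)\,\cal U_{\vec a}$ and hence $\phi_i(\cal G) = \sum_{\vec a \in A} v(\vec a)\,\phi_i(\cal U_{\vec a})$, so it suffices to show that each $\phi_i(\cal U_{\vec a})$ vanishes.

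Fix $\vec a$ and $i$. If $|A_i| = 1$ then $i$ is vacuously a dummy in $\cal U_{\vec a}$ and the dummy axiom gives $\phi_i(\cal U_{\vec a}) = 0$, so assume $|A_i| \ge 2$. The crux is to introduce the ``column'' dataset $\cal C := \sum_{b \in A_i} \cal U_{(\vec a_{-i},b)}$, which equals $\tup{N,A,w}$ with $w(\vec x) = 1$ precisely when $\vec x_{-i} = \vec a_{-i}$ and $w(\vec x) = 0$ otherwise. Since $w$ does not depend on the $i$-th coordinate, feature $i$ is a dummy in $\cal C$, so $\phi_i(\cal C) = 0$; expanding by additivity this reads $\sum_{b \in A_i}\phi_i(\cal U_{(\vec a_{-i},b)}) = 0$. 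Separately, for each $b \in A_i$ let $\tau : A_i \to A_i$ be the transposition of $a_i$ and $b$; because the product domain $A$ is unchanged by a relabeling of states, $\tau\cal U_{\vec a} = \cal U_{(\vec a_{-i},b)}$, and state symmetry gives $\phi_i(\cal U_{\vec a}) = \phi_i(\cal U_{(\vec a_{-i},b)})$. Substituting into the previous identity yields $|A_i|\cdot\phi_i(\cal U_{\vec a}) = 0$, hence $\phi_i(\cal U_{\vec a}) = 0$, which together with the first paragraph completes the argument. (Note that only state symmetry is used here; feature symmetry plays no role.)

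I expect the one genuinely creative step to be the choice of $\cal C$: realizing that ``filling in the entire $i$-column over $\vec a_{-i}$'' produces a dataset in which $i$ is a dummy, and that state symmetry then forces the $|A_i|$ singleton values along that column to coincide, is what makes everything collapse. The remaining pieces --- linearity from additivity and the singleton decomposition --- are the standard skeleton of Shapley/Banzhaf-style uniqueness proofs. The main thing to be careful about is domain bookkeeping: additivity is asserted only for datasets with a common domain, which is exactly why the singleton datasets are defined over the full $A$ and why the argument is cleanest stated for full datasets.
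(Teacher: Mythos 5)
Your proposal is correct and follows essentially the same route as the paper's proof: decompose $\phi_i$ over singleton datasets via additivity, observe that the ``column'' dataset (value $1$ exactly on $\{(\vec a_{-i},b) : b \in A_i\}$) makes feature $i$ a dummy, and use state symmetry on $A_i$ to force all singleton values along that column to coincide, yielding $|A_i|\,\phi_i(\cal U_{\vec a}) = 0$. The only difference is cosmetic --- you argue directly while the paper phrases it as a contradiction --- and your remarks about the domain bookkeeping and the dispensability of feature symmetry are accurate.
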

\begin{proof}
First, we show that for any $\vec a,\vec a' \in A$ and any $b \in A_i$, it must be the case that $\phi_i(\cal U_{(\vec a_{-i},b)}) = \phi_i(\cal U_{(\vec a_{-i}',b)})$. This is true because we can define a bijective mapping from $\cal U_{(\vec a_{-i},b)}$ to $\cal U_{(\vec a_{-i}',b)}$: for every $j \in N\setminus\{i\}$, we swap $a_j$ and $a_j'$. By state symmetry, $\phi_i(\cal U_{(\vec a_{-i},b)}) =\phi_i(\cal U_{(\vec a_{-i}',b)})$. 

Next, if $\phi$ is additive, then for any dataset $\cal G = \tup{N,B,v}$, $\phi_i(\cal G) = \sum_{\vec a \in B}v(\vec a)\phi_i(\cal U_{\vec a})$.
That is, the influence of a feature must be the sum of its influence%it takes 
over singleton datasets, scaled by $v(\vec a)$. 

Now, suppose for contradiction that there exists some singleton dataset $\cal U_{\bar{\vec a}}$ ($\bar{\vec a} \in B$) for which some feature $i \in N$ does not have an influence of zero. 
That is, we assume that $\phi_i(\cal U_{\bar{\vec a}}) \ne 0$. 
We define a dataset $\cal G = \tup{N,A,v}$ in the following manner: 
for all $\vec a \in A$ such that $\vec a_{-i} = \bar{\vec a}_{-i}$, we set $v(\vec a) = 1$, and $v(\vec a) = 0$ if $\vec a_{-i} \ne \bar{\vec a}_{-i}$. 
In the resulting dataset, $v(\vec a)$ is solely determined by the values of features in $N\setminus\{i\}$; 
in other words $v(\vec a) = v(\vec a_{-i},b)$ for all $b \in A_i$, hence feature $i$ is a dummy. According to the dummy axiom, we must have that $\phi_i(\cal G) = 0$; however, 
\begin{align*}
0&=\phi_i(\cal G) 	=  \sum_{\vec a:v(\vec a) = 1}\phi_i(\cal U_{\vec a})= \sum_{b \in A_i}\phi_i(\cal U_{(\bar{\vec a}_{-i},b)})\\
								&=  \sum_{b \in A_i}\phi_i(\cal U_{\bar{\vec a}}) = |A_i|\phi_i(\cal U_{\bar{\vec a}})> 0,
\end{align*}
where the first equality follows from the decomposition of $\cal G$ into singleton datasets,
and the third equality holds by Symmetry. \ariel{Added this explanation, please check.}\yair{makes sense} This is a contradiction.
\end{proof}

As Proposition~\ref{prop:impossibility1} shows, the additivity, symmetry and dummy properties do not lead to a meaningful description of influence. A reader familiar with the axiomatic characterization of the Shapley value~\citep{shapleyvalue} will find this result rather disappointing: the classic characterizations of the Shapley and Banzhaf values assume additivity (that said, The axiomatization by \cite{shapleyyoung} does not assume additivity). 

\ariel{In the following we definitely need $v$ to be Boolean-valued. Also, the paragraph is confusing because every vector is either ``winning'' or ``losing'', but that's not the case in the def of disjoint union.} 
We now show that there is an influence measure uniquely defined by an alternative axiom, which echoes the union intersection property described by \cite{lehrer1988axiomatization}. 
In what follows, we assume that all datasets are classified by a binary classifier.
We write $W(B)$ to be the set of all profiles in $B$ such that $v(\vec a) = 1$, and $L(B)$ to be the set of all profiles in $B$ that have a value of 0. We refer to $W(B)$ as the {\em winning profiles} in $B$, and to $L(B)$ as the {\em losing profiles} in $B$. We can thus write $\phi_i(W(B),L(B))$, rather than $\phi_i(\cal G)$. Given two disjoint sets $W,L \subseteq A$, we can define the dataset as $\cal G = \tup{W,L}$, and the influence of $i$ as $\phi_i(W,L)$, without explicitly writing $N,B$ and $v$.
%The additivity axiom can be reinterpreted as follows: given two disjoint subsets $W,W' \subseteq A$, additivity simply means that $\phi_i(W,A \setminus W) + \phi_i(W',A \setminus W') = \phi_i(W\cup W', A \setminus (W\cup W'))$. 
As we have seen, no measure can satisfy the additivity axiom (as well as symmetry and dummy axioms) without being trivial. We now propose an alternative influence measure, captured by the following axiom: 
\begin{definition}\label{def:disjoint-union}
An influence measure $\phi$ satisfies the {\em disjoint union (DU)} property if for any $Q \subseteq A$, and any disjoint $R,R' \subseteq A\setminus Q$, $\phi_i(Q,R) + \phi_i(Q,R') = \phi_i(Q,R\cup R')$, and $\phi_i(R,Q) + \phi_i(R',Q) = \phi_i(R\cup R',Q)$.
\end{definition}
An influence measure $\phi$ satisfying the (DU) axiom is additive with respect to independent observations {\em of the same type}.
Suppose that we are given the outputs of a binary classifier on two datasets: $\cal G_1 = \tup{W,L_1}$ and $\cal G_2 = \tup{W,L_2}$. The (DU) axiom states that the ability of a feature to affect the outcome on $\cal G_1$ is independent of its ability to affect the outcome in $\cal G_2$, if the winning states are the same in both datasets. 

%Suppose that we observe two types of state profiles, red and blue; we have already seen a set $B \subseteq A$ of blue profiles, and a set $R\subseteq A\setminus B$ of red profiles. The disjoint union property assumes that the ability of $i$ to change state profiles in $B$ to state profiles in $R$ and vice versa is independent of its ability to change state profiles in $B$ to red state profiles in some other set of red profiles $R'$ that is disjoint from $R$. 

Replacing additivity with the disjoint union property yields a unique influence measure, with a rather simple form.
\begin{align}
\attvalue_i(\cal G) = \sum_{\vec a \in B}\sum_{b \in A_i:(\vec a_{-i},b) \in B}|v(\vec a_{-i},b) - v(\vec a)|\label{eq:cause}
\end{align}
$\attvalue$ measures the number of times that a change in the state of $i$ causes a change in the classification outcome. If we normalize $\attvalue$ and divide by $|B|$, the resulting measure has the following intuitive interpretation: pick a vector $\vec a \in B$ uniformly at random, and count the number of points in $A_i$ for which $(\vec a_{-i},b)\in B$ and $i$ changes the value of $\vec a$. We note that when all features have two states and $B = A$, $\attvalue$ coincides with the (raw) Banzhaf power index~\citep{banzhaf}.

We now show that $\attvalue$ is a unique measure satisfying (D), (Sym) and (DU). We begin by presenting the following lemma, which characterizes influence measures satisfying (D), (Sym) and (DU) when dataset contains only a single feature. 

\begin{lemma}\label{lem:symmetry-dependence-on-winning-states}
Let $\phi$ be an influence measure that satisfies state symmetry, and let $\cal G_1 = \tup{\{i\},A_i,v_1}$ and $\cal G_2 = \tup{\{i\},A_i,v_2}$ be two datasets with a single feature $i$; if the number of winning states under $\cal G_1$ and $\cal G_2$ is identical, then $\phi_i(\cal G_1) = \phi_i(\cal G_2)$.
\end{lemma}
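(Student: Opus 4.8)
The plan is to use state symmetry directly: since $\cal G_1$ and $\cal G_2$ have the same number of winning states, one can relabel the states of the single feature $i$ so as to turn $\cal G_1$ literally into $\cal G_2$, and then the state symmetry axiom gives $\phi_i(\cal G_1) = \phi_i(\cal G_2)$. Concretely, I would first write $W_1 = \{b \in A_i : v_1(b) = 1\}$ and $W_2 = \{b \in A_i : v_2(b) = 1\}$. By hypothesis $|W_1| = |W_2|$, hence also $|A_i \setminus W_1| = |A_i \setminus W_2|$, so there is a bijection $\tau : A_i \to A_i$ with $\tau(W_1) = W_2$ — simply take any bijection from $W_1$ onto $W_2$ and extend it by any bijection from $A_i \setminus W_1$ onto $A_i \setminus W_2$.

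The next step is to verify that $\tau \cal G_1 = \cal G_2$ as datasets. Since $\cal G_1 = \tup{\{i\}, A_i, v_1}$ and $\tau$ is a bijection of $A_i$ onto itself, the present-profile set is preserved: $\tau A_i = A_i$, so both tuples have feature set $\{i\}$ and dataset $A_i$. It remains to check $\tau v_1 = v_2$. By the definition of the relabeled valuation, $\tau v_1(\tau b) = v_1(b)$ for all $b \in A_i$, i.e. $\tau v_1(c) = v_1(\tau^{-1}c)$; this value is $1$ iff $\tau^{-1}c \in W_1$ iff $c \in \tau(W_1) = W_2$ iff $v_2(c) = 1$. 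Thus $\tau v_1$ and $v_2$ agree everywhere on $A_i$, so indeed $\tau \cal G_1 = \cal G_2$. Applying the state symmetry axiom to $\cal G_1$ with the bijection $\tau$ (with $j = i$, the unique feature) then yields $\phi_i(\cal G_1) = \phi_i(\tau \cal G_1) = \phi_i(\cal G_2)$, as desired.

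I do not anticipate a genuine obstacle: the argument uses only state symmetry (not feature symmetry, dummy, or disjoint union), and the whole content is the construction of the state bijection $\tau$. The only point requiring care is purely notational — getting the direction of composition right in the definition of $\tau v_1$, and confirming that $\tau \cal G_1$ equals $\cal G_2$ exactly as a tuple (including the unchanged domain $B = A_i$), rather than merely up to isomorphism, so that the axiom applies verbatim.
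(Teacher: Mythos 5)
Your proof is correct and follows the same route as the paper's (sketched) argument: build a bijection $\tau:A_i\to A_i$ carrying the winning states of $\cal G_1$ onto those of $\cal G_2$ (and losing onto losing), check $\tau\cal G_1=\cal G_2$, and invoke state symmetry. You have simply filled in the details the paper leaves implicit; no gap.
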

\yair{replaced proof with sketch}
\begin{proof}[Proof Sketch]
We simply construct a bijective mapping from the winning states of $i$ under $\cal G_1$ and its winning states in $\cal G_2$. By state symmetry, $\phi_i(\cal G_1) = \phi_i(\cal G_2)$.
\end{proof}
%\begin{proof}
%Let $W_1$ and $W_2$ be the winning states of $i$ under $\cal G_1$ and $\cal G_2$, respectively. We write $W_1 = \{b_1,\dots,b_k\}$ and $W_2 = \{c_1,\dots,c_k\}$. Let $\sigma:A_i \to A_i$ be a bijective mapping such that $\sigma(b_j) = c_j$ for all $j \in\{1,\dots,k\}$; since $|W_1| = |W_2|$, such a mapping exists. Recall that for any mapping $v:A_i \to \{0,1\}$ and any bijection $\tau: A_i \to A_i$, $\tau v(\tau(b)) = 1$ if and only if $v(b) = 1$. Now, taking $\sigma$ as described above, $\sigma v_1(\sigma(b)) =1$ if and only if $v_1(b) = 1$; hence, $v_2 = \sigma v_1$. Finally, if a value $\phi$ satisfies state symmetry, it must be that $\phi_i(\cal G_1) =\phi_i(\sigma \cal G_1) = \phi_i(\cal G_2)$, which concludes the proof.
%\end{proof}
Lemma~\ref{lem:symmetry-dependence-on-winning-states} implies that for single feature games, the value of a feature only depends on the number of winning states, rather than their identity.

We are now ready to show the main theorem for this section: $\attvalue$ is the unique influence measure satisfying the three axioms above, up to a constant factor. 
\begin{theorem}\label{thm:attvalue}
An influence measure $\phi$ satisfies (D), (Sym) and (DU) if and only if there exists a constant $C$ such that for every dataset $\cal G = \tup{N,B,v}$
$$\phi_i(\cal G) = C\cdot \attvalue_i(\cal G).$$
\end{theorem}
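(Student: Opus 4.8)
The plan is to prove both implications. The ``if'' direction is a direct verification from \eqref{eq:cause}: (D) is immediate since every summand of $\attvalue_i$ vanishes when $i$ is a dummy; (Sym) holds because relabeling features, or relabeling the states of any single feature, merely permutes the index set over which the absolute differences in \eqref{eq:cause} are summed; and (DU) follows by writing, for $Q\subseteq A$, $\attvalue_i(Q,R)=2\sum_{\vec c}|Q_{\vec c}|\cdot|R_{\vec c}|$, where $Q_{\vec c},R_{\vec c}\subseteq A_i$ are the fibers of $Q,R$ over a setting $\vec c$ of the features in $N\setminus\{i\}$, and noting that disjointness of $R$ and $R'$ gives $|R_{\vec c}|+|R'_{\vec c}|=|(R\cup R')_{\vec c}|$ (symmetrically for the winning argument). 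Consequently $C\cdot\attvalue$ satisfies all three axioms for every constant $C$.

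For the ``only if'' direction, fix $\phi$ satisfying (D), (Sym), (DU) and a dataset $\cal G=\tup{W,L}$. The first step is a fibered decomposition of $\phi_i$. For a setting $\vec c$ of the features in $N\setminus\{i\}$, let $W^{(\vec c)}=\{\vec a\in W:\vec a_{-i}=\vec c\}$ and define $L^{(\vec c)}$ analogously. Applying (DU) repeatedly --- first on the losing argument and then on the winning argument --- yields $\phi_i(W,L)=\sum_{\vec c,\vec c'}\phi_i(W^{(\vec c)},L^{(\vec c')})$. Now, whenever $\vec c\neq\vec c'$, feature $i$ is a dummy in $\tup{W^{(\vec c)},L^{(\vec c')}}$: the only profiles of $B=W^{(\vec c)}\cup L^{(\vec c')}$ that share their $-i$-coordinates with a given profile lie on the same fiber and hence carry the same value, so no change of state of $i$ alters the outcome. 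By (D) all these cross terms vanish, and we are left with $\phi_i(\cal G)=\sum_{\vec c}\phi_i(W^{(\vec c)},L^{(\vec c)})$.

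The second step computes a single fiber $\tup{W^{(\vec c)},L^{(\vec c)}}$, which is ``essentially single-feature'' since every feature other than $i$ is constant on it. Arguing as in Lemma~\ref{lem:symmetry-dependence-on-winning-states}, (Sym) shows that $\phi_i(W^{(\vec c)},L^{(\vec c)})$ depends only on $w=|W^{(\vec c)}|$ and $\ell=|L^{(\vec c)}|$: a relabeling of the states of $i$ matches any two fibers with the same $(w,\ell)$, and relabelings of the states of the other features move $\vec c$ anywhere. Call this common value $h(w,\ell)$. By (D), $h(w,0)=h(0,\ell)=0$. Applying (DU) on the losing argument gives $h(w,\ell)+h(w,\ell')=h(w,\ell+\ell')$, and on the winning argument $h(w,\ell)+h(w',\ell)=h(w+w',\ell)$, both valid whenever the fibers involved fit inside $A_i$; together with the boundary values these force $h(w,\ell)=w\ell\cdot h(1,1)$. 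Since $\attvalue_i(W^{(\vec c)},L^{(\vec c)})=2w\ell$, we obtain $\phi_i(W^{(\vec c)},L^{(\vec c)})=C_i\cdot\attvalue_i(W^{(\vec c)},L^{(\vec c)})$ with $C_i:=h(1,1)/2$, and summing over fibers gives $\phi_i(\cal G)=C_i\cdot\attvalue_i(\cal G)$ for every dataset.

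It remains to show that $C_i$ does not depend on $i$, and this is the main obstacle. To compare features $i$ and $j$, build a dataset $\cal E$ that is symmetric in $i$ and $j$: pick distinct states $a_0,a_1$ of $i$ and $b_0,b_1$ of $j$, freeze the remaining features, let $B$ be the resulting $2\times 2$ grid, and set $v=1$ on $(a_1,b_0)$ and $(a_0,b_1)$ and $v=0$ on $(a_0,b_0)$ and $(a_1,b_1)$ (an ``xor'' pattern). Then $\attvalue_i(\cal E)=\attvalue_j(\cal E)=4$, so by the previous paragraph $\phi_i(\cal E)=4C_i$ and $\phi_j(\cal E)=4C_j$. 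Transposing the indices $i$ and $j$ (composed, if needed, with a relabeling of states) maps $\cal E$ to itself precisely because of the symmetry built into $v$, so (Sym) gives $\phi_i(\cal E)=\phi_j(\cal E)$ and hence $C_i=C_j$; this common value is the desired $C$. The delicate point --- and the reason the construction must be set up this carefully --- is that (Sym) only permits swapping two features through a shared relabeling of states, so this comparison is cleanest when all features range over the same state space; one reduces to that situation by identifying each $A_i$ with $\{1,\dots,|A_i|\}$.
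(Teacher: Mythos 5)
Your proof is correct and follows essentially the same route as the paper's: the same fibered (DU) decomposition with the dummy axiom killing the cross terms $\phi_i(W^{(\vec c)},L^{(\vec c')})$ for $\vec c\neq\vec c'$, the same reduction via state symmetry (Lemma~\ref{lem:symmetry-dependence-on-winning-states}) to a function $h(w,\ell)$ of the fiber counts, and the same additivity-plus-boundary argument forcing $h(w,\ell)=w\ell\, h(1,1)$, hence $\phi_i=C_i\attvalue_i$. The only divergence is the final step: the paper obtains $C_i=C_j$ by applying feature symmetry to an arbitrary dataset $\cal G$ and its relabeling $\sigma\cal G$, whereas you evaluate $\phi$ on a concrete swap-invariant ``xor'' dataset; this is a minor variation resting on the same use of the feature-symmetry axiom, and both versions share the same (harmless) informality about comparing features whose state spaces differ.
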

\begin{proof}
It is an easy exercise to verify that $\attvalue$ satisfies the three axioms, so we focus on the ``only if'' direction.

We present our proof assuming that we are given the set $A$ as data; the proof goes through even if we assume that we are presented with some arbitrary $B \subseteq A$. Let us write $W = W(A)$ and $L = L(A)$. Given some $\vec a_{-i} \in A_{-i}$, we write $L_{\vec a_{-i}} = \{\bar{\vec a} \in L\mid \vec a_{-i} = \bar{\vec a}_{-i}\}$, and $W_{\vec a_{-i}} = \{\bar{\vec a} \in W\mid \vec a_{-i} = \bar{\vec a}_{-i}\}$.

Using the disjoint union property, we can decompose $\phi_i(W,L)$ as follows: 
\begin{align}\label{eq:state-additivity}
\phi_i(W,L) = \sum_{\vec a_{-i} \in A_{-i}}\sum_{\bar{\vec a}_{-i}\in A_{-i}}\phi_i(W_{\vec a_{-i}},L_{\bar{\vec a}_{-i}}).
\end{align}
Now, if $\bar{\vec a}_{-i} \ne \vec a_{-i}$, then feature $i$ is a dummy given the dataset provided. Indeed, state profiles are either in $W_{\vec a_{-i}}$ or in $L_{\bar{\vec a}_{-i}}$; that is, if $v(\vec a_{-i},b) = 0$, then $(\vec a_{-i},b)$ is unobserved, and if $v(\bar{\vec a}_{-i},b) = 1$, then $(\bar{\vec a}_{-i},b)$ is unobserved. We conclude that 
\begin{align}\label{eq:state-additivity2}
\phi_i(W,L) = \sum_{\vec a_{-i} \in A_{-i}}\phi_i(W_{\vec a_{-i}},L_{\vec a_{-i}}).
\end{align}
Let us now consider $\phi_i(W_{\vec a_{-i}},L_{\vec a_{-i}})$. 
Since $\phi$ satisfies state symmetry, Lemma~\ref{lem:symmetry-dependence-on-winning-states} implies that $\phi_i$ can only possibly depend on $\vec a_{-i}$, $|W_{\vec a_{-i}}|$ and $|L_{\vec a_{-i}}|$. Next, for any $\vec a_{-i}$ and $\vec a_{-i}'$ such that $|L_{\vec a_{-i}}| = |L_{\vec a_{-i}'}|$ and $|W_{\vec a_{-i}}| = |W_{\vec a_{-i}'}|$, so by Lemma~\ref{lem:symmetry-dependence-on-winning-states} $\phi_i(W_{\vec a_{-i}},L_{\vec a_{-i}}) = \phi_i(W_{\vec a_{-i}'},L_{\vec a_{-i}'})$. In other words $\phi_i$ only depends on $|W_{\vec a_{-i}}|,|L_{\vec a_{-i}}|$, and not on the identity of $\vec a_{-i}$.

Thus, one can see $\phi_i$ for a single feature as a function of two parameters, $w$ and $l$ in $\N$, where $w$ is the number of winning states and $l$ is the number of losing states. According to the dummy property, we know that $\phi_i(w,0) = \phi_i(0,l) = 0$; moreover, the disjoint union property tells us that $\phi_i(x,l) + \phi_i(y,l) = \phi_i(x+y,l)$, and that $\phi_i(w,x) + \phi_i(w,y) = \phi_i(w,x+y)$. We now show that $\phi_i(w,l) = \phi_i(1,1)wl$. 

Our proof is by induction on $w+l$. For $w+l = 2$ the claim is clear. Now, assume without loss of generality that $w > 1$ and $l \ge 1$; then we can write $w = x+y$ for $x,y\in \N$ such that $1\le x,y <w$. By our previous observation, 
\begin{align*}
\phi_i(w,l) &=		 \phi_i(x,l) + \phi_i(y,l) \\ 
						&\ih  \phi_i(1,1)xl + \phi_i(1,1)yl = \phi_i(1,1)wl\nonumber .
\end{align*}
Now, $\phi_i(1,1)$ is the influence of feature $i$ when there is exactly one losing state profile, and one winning state profile. 
We write $\phi_i(1,1)= c_i$. 

Let us write $W_i(\vec a_{-i})= \{b \in A_i\mid v(\vec a_{-i},b) = 1\}$ and $L_i(\vec a_{-i}) = A_i\setminus W_i(\vec a_{-i})$. Thus, $|W_{\vec a_{-i}}| = |W_i(\vec a_{-i})|$, and $|L_{\vec a_{-i}}| = |L_i(\vec a_{-i})|$. 
Putting it all together, we get that 
\begin{align}\label{eq:winloss}
\phi_i(\cal G) = c_i\sum_{\vec a_{-i} \in A_{-i}} |W_i(\vec a_{-i})|\cdot |L_i(\vec a_{-i})|
\end{align}
We just need to show that the measure given in~\eqref{eq:winloss} equals $\attvalue_i$ (modulo $c_i$). Indeed, \eqref{eq:winloss} equals $\sum_{\vec a \in A:\  v(\vec a) = 0} |W_i(\vec a_{-i})|$, which in turn equals $\sum_{\substack{\vec a \in A:\ v(\vec a) = 0}}\ \sum_{b \in A_i} |v(\vec a_{-i},b) - v(\vec a)|$. Similarly, \eqref{eq:winloss} equals $$\sum_{\substack{\vec a \in A: v(\vec a) = 1}}\ \sum_{b \in A_i} |v(\vec a_{-i},b) - v(\vec a)|.$$ Thus,
$$\sum_{\vec a_{-i} \in A_{-i}} |W_i(\vec a_{-i})|\cdot |L_i(\vec a_{-i})| = \frac12\sum_{\vec a \in A}\sum_{b \in A_i}|v(\vec a_{-i},b) - v(\vec a)|;$$
in particular, for every dataset $\cal G=\tup{N,A,v}$ and every $i \in N$, there is some constant $C_i$ such that $\phi_i(\cal G) = C_i\attvalue_i(\cal G)$.
To conclude the proof, we must show that $C_i= C_j$ for all $i,j \in N$. Let $\sigma:N\to N$ be the bijection that swaps $i$ and $j$; then $\phi_i(\cal G) = \phi_{\sigma(i)}(\sigma\cal G)$. By feature symmetry, $C_i\attvalue_i(\cal G) = \phi_i(\cal G) = \phi_{\sigma(i)}(\sigma\cal G) = \phi_j(\sigma\cal G) = C_j\attvalue_j(\sigma\cal G) = C_j\attvalue_i(\cal G)$, thus $C_i = C_j$. 
\end{proof}

\section{Case Study: Influence for Linear Classifiers}\label{sec:linear}
To further ground our results, we now present their application to the class of linear classifiers. For this class of functions, our influence measure takes on an intuitive interpretation. 

A {\em linear classifier} is defined by a hyperplane in $\R^n$; all points that are on one side of the hyperplane are colored blue (in our setting, have value 1), and all points on the other side are colored red (have a value of 0). Formally, we associate a weight $w_i \in \R$ with every one of the features in $N$ (we assume that $w_i \ne 0$ for all $i \in N$); a point $\vec x \in \R^n$ is blue if $\vec x \cdot \vec w \ge q$, where $q \in \R$ is a given parameter. 
The classification function $v:\R^n \to \{0,1\}$ is given by 
\begin{align}\label{eq:linear-classifier}
v(\vec x) = \begin{cases}1 &\mbox{if } \vec x\cdot \vec w \ge q\\0 &\mbox{otherwise.} \end{cases}
\end{align}

Fixing the value of $x_i$ to some $b \in \R$, let us consider the set 
$W_i(b) = \{\vec x_{-i} \in \R^{n-1}\mid v(\vec x_{-i},b) =1\}$; we observe that if $b < b'$ and $w_i > 0$, then $W_i(b) \subset W_i(b')$ (if $w_i < 0$ then $W_i(b')\subset W_i(b)$). 
Given two values $b,b' \in \R$, we denote by $$D_i(b,b') = \{\vec x_{-i} \in \R^{n-1}\mid v(\vec x_{-i},b) \ne v(\vec x_{-i},b')\}.$$ By our previous observation, if $b < b'$ then $D_i(b,b') = W_i(b') \setminus W_i(b)$, and if $b > b'$ then $D_i(b,b') = W_i(b)\setminus W_i(b')$. 

Suppose that rather than taking points in $\R^n$, we only take points in $[0,1]^n$; then we can define $|D_i(b,b')| = \vol(D_i(b,b'))$, where $$\vol(D_i(b,b')) = \int_{\vec x_{-i} \in [0,1]^{n-1}} |v(\vec x_{-i},b') - v(\vec x_{-i},b)|\partial \vec x_{-i}.$$
In other words, in order to measure the total influence of setting the state of feature $i$ to $b$, we must take the total volume of $D_i(b,b')$ for all $b' \in [0,1]$, which equals $\int_{b' = 0}^1\vol(D_i(b,b'))\partial b$. Thus, the total influence of setting the state of $i$ to $b$ is $\int_{\vec x \in [0,1]^{n}}|v(\vec x_{-i},b) - v(\vec x)|\partial \vec x$.
The total influence of $i$ would then be naturally the total influence of its states, i.e.
\begin{align}\label{eq:volume-feature-influence}
\int_{b = 0}^1\int_{\vec x \in [0,1]^{n}}|v(\vec x_{-i},b) - v(\vec x)|\partial \vec x \partial b.
\end{align}
The formula in Equation~\eqref{eq:volume-feature-influence} is denoted by $\attvalue_i(\vec w;q)$. Equation~\eqref{eq:cause} is a discretized version of Equation~\eqref{eq:volume-feature-influence}; the results of Section~\ref{sec:featureinfluence} can be extended to the continuous setting, with only minimal changes to the proofs. 

We now show that the measure given in~\eqref{eq:volume-feature-influence} agrees with the weights in some natural manner. 
This intuition is captured in Theorem~\ref{thm:monotone-weight} (proof omitted).

\begin{theorem}\label{thm:monotone-weight}
Let $v$ be a linear classifier defined by $\vec w$ and $q$; then $\attvalue_i(\cal G) \ge \attvalue_j(\cal G)$ if and only if $|w_i| \ge |w_j|$.
\end{theorem}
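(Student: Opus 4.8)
The plan is to reduce $\attvalue_i(\cal G)$ to a one–dimensional convolution and then settle the comparison by a short explicit computation; here $\cal G$ is the continuous dataset on $[0,1]^n$, so $\attvalue_i(\cal G)$ is the integral~\eqref{eq:volume-feature-influence}. First I would \emph{normalize}: for any coordinate $k$, the substitution $x_k\mapsto 1-x_k$ throughout $[0,1]^{n+1}$ (and $b\mapsto 1-b$ as well when $k=i$) is volume–preserving and turns $(\vec w,q)$ into $(\vec w',q')$ with $w_k'=-w_k$, $q'=q-w_k$, the other weights unchanged, while leaving \emph{every} $\attvalue_\ell(\cal G)$ unchanged. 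So we may assume $w_k>0$ for all $k$ and, relabelling, $w_i\ge w_j$. I would also flag a non-degeneracy caveat: if $q\le 0$ or $q\ge\sum_k w_k$ then $v$ is a.e.\ constant on $[0,1]^n$, every feature is a dummy, and all $\attvalue_\ell(\cal G)=0$, so the ``only if'' direction needs the standing hypothesis $0<q<\sum_k w_k$ (the separating hyperplane meets the open cube); assume it.

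Second, I would prove the closed form $\attvalue_i(\cal G)=2\,(\rho_{w_i}*f_{-i})(q)$, where $*$ is convolution, $f_{-i}$ is the density of $S_{-i}:=\sum_{k\ne i}w_kx_k$ (with the $x_k$ independent uniform on $[0,1]$; the Dirac mass at $0$ if $n=1$), and $\rho_w$ equals $\tfrac t w(1-\tfrac t w)$ on $[0,w]$ and $0$ elsewhere --- a downward parabola symmetric about $w/2$ with $\int_{\R}\rho_w=w/6$. This follows by conditioning~\eqref{eq:volume-feature-influence} on $\vec x_{-i}$: with $s=\vec w_{-i}\cdot\vec x_{-i}$ and $\theta=(q-s)/w_i$, both $v(\vec x_{-i},x_i)$ and $v(\vec x_{-i},b)$ are the indicator that a uniform $[0,1]$ variable is $\ge\theta$, so the inner double integral over $(x_i,b)$ is $2\pi(1-\pi)$ with $\pi=\min(1,\max(0,1-\theta))$, and a case check gives $\pi(1-\pi)=\rho_{w_i}(q-s)$; averaging over $\vec x_{-i}$ (equivalently over $S_{-i}$) gives the claim.

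Third, let $f_Z$ be the density of $Z:=\sum_{k\ne i,j}w_kx_k$ (the Dirac mass at $0$ if $n=2$). Since $S_{-i}=Z+w_jx_j$ and $S_{-j}=Z+w_ix_i$ with $x_i,x_j$ fresh uniforms, $f_{-i}=\tfrac1{w_j}\mathbf 1_{[0,w_j]}*f_Z$ and $f_{-j}=\tfrac1{w_i}\mathbf 1_{[0,w_i]}*f_Z$, whence
\[
\tfrac12\bigl(\attvalue_i(\cal G)-\attvalue_j(\cal G)\bigr)=(\Delta*f_Z)(q),\qquad \Delta:=\tfrac1{w_j}\mathbf 1_{[0,w_j]}*\rho_{w_i}-\tfrac1{w_i}\mathbf 1_{[0,w_i]}*\rho_{w_j}.
\]
Everything now reduces to the sign of $\Delta$, and the key claim is: $\Delta\ge 0$ everywhere, and $\Delta>0$ on $(0,w_i+w_j)$ whenever $w_i>w_j$. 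Granting it, $f_Z$ is a nonnegative measure so $(\Delta*f_Z)(q)\ge 0$, i.e.\ $\attvalue_i(\cal G)\ge\attvalue_j(\cal G)$ whenever $w_i\ge w_j$; and if $w_i>w_j$ then, because $0<q<\sum_k w_k$, the set $\{z\in\mathrm{supp}(f_Z):q-z\in(0,w_i+w_j)\}$ carries positive $f_Z$-mass, so $\attvalue_i(\cal G)>\attvalue_j(\cal G)$. Applying the latter with $i$ and $j$ interchanged handles $w_i<w_j$, so altogether $\attvalue_i(\cal G)\ge\attvalue_j(\cal G)$ iff $w_i\ge w_j$, i.e.\ iff $|w_i|\ge|w_j|$ for the original weights.

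Finally, the key claim --- which I expect to be the main, if routine, obstacle --- is a direct computation. Both terms of $\Delta$ are piecewise cubic, supported on $[0,w_i+w_j]$, and symmetric about $(w_i+w_j)/2$; so is $\Delta$, hence it suffices to check $\Delta(c)\ge 0$ for $0\le c\le (w_i+w_j)/2$, split at $c=w_j$ (note $w_j\le (w_i+w_j)/2$). For $c\in[0,w_j]$ the truncations are inactive on the relevant integration intervals and one gets $\Delta(c)=\dfrac{c^3(w_i-w_j)}{3\,w_i^2w_j^2}\ge 0$. For $c\in[w_j,(w_i+w_j)/2]$ one gets $w_i^2\,\Delta(c)=c(w_i+w_j-c)-\tfrac13 w_j(2w_i+w_j)$, a concave quadratic in $c$ with endpoint values $\tfrac13 w_j(w_i-w_j)$ at $c=w_j$ and $\tfrac1{12}(3w_i+w_j)(w_i-w_j)$ at $c=(w_i+w_j)/2$, both $\ge 0$; a concave function that is $\ge 0$ at both ends of an interval is $\ge 0$ throughout it, and all these inequalities are strict when $w_i>w_j$, proving the claim. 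Two honesty checks remain: (i) the degenerate classifiers above, which is exactly why the ``only if'' direction needs $0<q<\sum_k w_k$; and (ii) the bookkeeping that the normalizing reflections carry the original $|w_i|,|w_j|$ to the reduced $w_i\ge w_j$, so the conclusion concerns the original weights' magnitudes.
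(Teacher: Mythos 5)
Your proof is correct, and it takes a genuinely different route from the paper's. The paper first rewrites $\attvalue_i = 2\vol(\Piv_i)$ and, for weights of the same sign, compares $\vol(\Piv_i)$ with $\vol(\Piv_j)$ by partitioning $\Piv_j$ into three pieces and mapping them into $\Piv_i$ via explicit coordinate-swapping, volume-preserving maps $F_{ij},G_{ij}$ (Lemmas~\ref{lem:x_i>x_j}--\ref{lem:g_ij}, Theorem~\ref{thm:piv}); the mixed-sign case is then settled by a lengthy direct two-feature integration (Lemma~\ref{lem:2player-monotoneweight}) with casework on the sign of $q$ and on $q$ versus $w_1+w_2$, and general $n$ follows by integrating out $\vec x_{-i,-j}$. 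You share only that last reduction: convolving with $f_Z$ is exactly the paper's conditioning on the remaining coordinates. Its two core steps you replace: the reflection $x_k\mapsto 1-x_k$ normalizes all weights to be positive, eliminating the mixed-sign casework entirely, and the closed form $\attvalue_i=2(\rho_{w_i}*f_{-i})(q)$ reduces the two-feature comparison to the sign of one piecewise-cubic function $\Delta$, handled by symmetry plus concavity (your endpoint values and the formula $w_i^2\Delta(c)=c(w_i+w_j-c)-\tfrac13 w_j(2w_i+w_j)$ check out). This buys a shorter, uniform argument and, importantly, strict inequality when $|w_i|>|w_j|$ under the nondegeneracy condition $0<q<\sum_k w_k$ (after normalization), which is what the ``only if'' direction genuinely requires: the paper's Theorem~\ref{thm:piv} delivers only the weak inequality, and in the degenerate case where the classifier is constant on the cube the stated biconditional actually fails --- a point the paper glosses over (its two-feature lemma declares ``we are done'' when all influences vanish) but which you flag correctly as a needed standing hypothesis. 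What the paper's construction buys in exchange is a computation-free geometric picture for the same-sign case, exhibiting which pivotal configurations for $j$ transfer to pivotal configurations for $i$; your $\Delta$-computation is the analytic counterpart of its two-feature lemma, done once for all sign patterns.
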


Given Theorem~\ref{thm:monotone-weight}, one would expect the following to hold: suppose that we are given two weight vectors, $\vec w, \vec w' \in \R^n$ such that $w_j= w_j'$ for all $j \ne i$, but $w_i < w_i'$. Let $v$ be the linear classifier defined by $\vec w$ and $q$ and $v'$ be the linear classifier defined by $\vec w'$ and $q$. Is it the case that feature $i$ is more influential under $v'$ than under $v$? In other words, does influence monotonicity hold when we increase the weight of an individual feature? The answer to this is negative. 
\begin{example}\label{ex:counterex-monotone}
Let us consider a single feature game where $N = \{1\}$, $A_1 = [0,1]$, and $v(x) = 1$ if $wx \ge q$, and $v(x) = 0$ if $wx < q$ for a given $w > q$. The fraction of times that $1$ is pivotal is $$|\Piv_1| = \int_{b = 0}^1\int_{x = 0}^1 \I(v(b) {=} 1 \land v(x) {=} 0)\partial x \partial b;$$ simplifying, this expression is equal to $\left(1 - \frac qw\right)\frac qw$.
%\begin{align*}
%|\Piv_1| 	= & \int_{b = 0}^1\int_{x = 0}^1 \I(v(b) = 1 \land v(x) = 0)\partial x \partial b\\
					%= & \int_{b = 0}^1\int_{x = 0}^1 \I(v(b) = 1)\I(v(x) = 0)\partial x \partial b\\
					%= & \int_{b = 0}^1\I(v(b) = 1)\int_{x = 0}^1 \I(v(x) = 0)\partial x \partial b\\
					%= & \int_{b = 0}^1\I(v(b) = 1)\frac qw \partial b = \left(1 - \frac qw\right)\frac qw.
%\end{align*}
We can show that $\attvalue_1 = 2|\Piv_1|$ \ariel{Not sure the factor of 2 is needed}\yair{we are not counting all of the times that $v(b) = 0\land v(x) = 1$, i.e. when $b$ is ``anti-pivotal''; but, the two sets are isomorphic.}, we have that $\attvalue_1$ is maximized when $q = 2w$; in particular, $\attvalue_1$ is monotone increasing when $q < w \le 2q$, and it is monotone decreasing when $w \ge 2q$.
\end{example}
\yair{Do you guys think that this discussion is good? I think it is good to highlight different properties of our measure, the question is whether we are being too verbose}
Example~\ref{ex:counterex-monotone} highlights the following phenomenon: fixing the % <amit changed> value of 
other features to be $\vec a_{-i}$, the influence of $i$ is maximized when $|L_{\vec a_{-i}}| = |W_{\vec a_{-i}}|$. This can be interpreted probabilistically: we sample a random feature from $B$, and assume that for any fixed $\vec a_{-i} \in A_{-i}$, $\Pr[v(\vec a_{-i},b) = 1] = \frac12$. The better a feature $i$ agrees with our assumption, the more $i$ is rewarded. More generally, an influence measure satisfies the {\em agreement with prior assumption} (APA) axiom if for any vector $(p_1,\dots,p_n) \in [0,1]^n$, and any fixed $\vec a_{-i} \in A_{-i}$, $i$'s influence increases as $|\Pr[v(\vec a_{-i},b) = 1] - p_i|$ decreases. 
A variant of the symmetry axiom (that reflects changes in probabilities when labels change), along with the dummy and disjoint union axioms can give us a weighted influence measure as described in Section~\ref{sec:weighted-influence}, that also satisfies the (APA) axiom.

\section{Extensions of the Feature Influence Measure}\label{sec:extensions}
Section~\ref{sec:featureinfluence} presents an axiomatic characterization of feature influence, where the value of each feature vector is either zero or 1. We now present a few possible extensions of the measure, and the variations on the axioms that they require.
\subsection{State Influence}
Section~\ref{sec:featureinfluence} provided an answer to questions of the following form: what is the impact of gender on classification outcomes? 
The answer provided in previous sections was that influence was a function of the feature's ability to change outcomes by changing its state. 

It is also useful to ask a related question: what is the impact of the gender feature being set to ``female'' on classification outcomes? 
In other words, rather than measuring feature influence, we are measuring the influence of feature $i$ being in a certain {\em state}. The results described in Section~\ref{sec:featureinfluence} can be easily extended to this setting. 
Moreover, the impossibility result described in Proposition~\ref{prop:impossibility1} no longer holds when we measure state --- rather than feature --- influence: we can replace the disjoint union property with additivity to obtain an alternative classification of state influence. 
\subsection{Weighted Influence}\label{sec:weighted-influence}
Suppose that in addition to the dataset $B$, we are given a weight function $w:B \to \R$. $w(\vec a)$ can be thought of as the number of occurrences of the vector $\vec a$ in the dataset, the probability that $\vec a$ appears, or some intrinsic importance measure of $\vec a$. Note that in Section~\ref{sec:featureinfluence} we implicitly assume that all points occur at the same frequency (are equally likely) and are equally important. 
A simple extension of the disjoint union and symmetry axioms to a weighted variant shows that the only weighted influence measure that satisfies these axioms is 
$$\attvalue_i^w(B) =\sum_{\vec a \in B}\sum_{b \in A_i:(\vec a_{-i},b) \in B}w(\vec a)|v(\vec a_{-i},b) - v(\vec a)|.$$
\subsection{General Distance Measures}\label{sec:distancemeasures3}
Suppose that instead of a classifier $v:A \to \{0,1\}$ we are given a pseudo-distance measure: that is, a function $d:A\times A \to \R$ that satisfies $d(\vec a,\vec a') = d(\vec a',\vec a)$, $d(\vec a,\vec a) = 0$ and the triangle inequality. Note that it is possible that $d(\vec a,\vec a') = 0$ but $\vec a \ne \vec a'$. An axiomatic analysis in such general settings is possible, but requires more assumptions on the behavior of the influence measure. Such an axiomatic approach leads us to show that the influence measure  
$$\attvalue_i^d(B) = \sum_{\vec a \in B}\sum_{b \in A_i:(\vec a_{-i},b) \in B}d((\vec a_{-i},b),\vec a)$$
is uniquely defined via some natural axioms. The additional axioms are a simple extension of the disjoint union property, and a minimal requirement stating that when $B = \{\vec a,(\vec a_{-i},b)\}$, then the influence of a feature is $\alpha d((\vec a_{-i},b),\vec a)$ for some constant $\alpha$ independent of $i$. The extension to pseudo-distances proves to be particularly useful when we conduct empirical analysis of Google's display ads system, and the effects user metrics have on display ads. 
\section{Implementation}\label{sec:ads}

We implement our influence measure to study Google's display advertising system. 
Users can set demographics (like gender or age) on the Google Ad Settings page\footnote{\texttt{google.com/settings/ads}}; these are used by the Google ad serving algorithm 
to determine which ads to serve. 
We apply our influence measure to study how demographic settings influence the targeted ads served by Google.  
We use the AdFisher tool~\cite{datta14arxiv} for automating browser activity and collect ads. 

We pick the set of features: $N = \{\mbox{gender, age, language}\}$. Feature states are $\{male,female\}$ for gender, $\{18-24,35-44,55-64\}$ for age, and $\{\mbox{English},\mbox{Spanish}\}$ for language;
this gives us $2\times 3\times 2= 12$ possible user profiles.
Using AdFisher, 
we launch twelve fresh browser instances, and assign each one a random user profile. For each browser instance, the corresponding settings are applied on the Ad Settings page, and Google ads on the BBC news page \url{bbc.com/news} are collected. For each browser, the news page is reloaded $10$ times with $5$ second intervals. 

To eliminate ads differing due to random chance, we collect ads over $100$ iterations, each comprising of $12$ browser instances, thereby obtaining data for $1200$ simulated users. In order to minimize confounding factors such as location and system specifications, all browser instances were run from the same stationary Ubuntu machine.  
The $1200$ browsers received a total of $32,451$ ads ($763$ unique); in order to reduce the amount of noise, we focus only on ads that were displayed more than 100 times, leaving a total of $55$ unique ads. 
Each user profile $\vec a$ thus has a frequency vector of all ads $v'(\vec a)\in \N^{55}$, where the $k^\th$ coordinate is the number of times ad $k$ appeared for a user profile $\vec a$. 
We normalize $v'(\vec a)$ for each ad by the total number of times that ad appeared. Thus we obtain the final value-vectors by computing ${v_k}(\vec a) = \frac{ v'_k(\vec a)}{\sum_{\vec a} v'_k(\vec a)}, \forall \vec a, \forall k \in \{1,\dots,55\}$. 

Since user profile values are vectors, we use the general distance influence measure described in Section~\ref{sec:distancemeasures3}. The pseudo-distance we use is Cosine similarity: $\cosd(\vec x, \vec y) = 1 - \frac{\vec x\cdot \vec y}{||\vec x||  \cdot ||\vec y||}$; this has been used 
Cosine similarity has been used by \cite{tschantz2014information} and \cite{Guha'10} to measure similarity between display ads.  
The influence measure for gender, age, and language were $0.124$, $0.120$, and $0.141$ respectively; in other words, no specific feature has a strong influence over ads displayed. 

We next turn to measuring feature effects on specific ads.  
Fixing an ad $k$, we define the value of a feature vector to be the number of times that ad $k$ was displayed for users with that feature vector, and use $\attvalue$ to measure influence.  

We compare the influence measures for each attribute across all the ads and identify the top ads that demonstrate high influence. The ad for which language had the highest influence ($0.167$) was a Spanish language ad, which was served only to browsers that set `Spanish' as their language on the Ad Settings page. Comparing with statistics like mean and maximum over measures across all features given in Table~\ref{table:stats}, we can see that this influence was indeed high.

\begin{table}

\centering

\begin{tabular}{l r r r}

\hline

Statistic & Gender & Age & Language \\ [0.2ex] % inserts table %heading

\hline

Max  &			$0.07$     &   		$0.0663$  	&		$0.167$\\

Min  	& 			$0.00683$ & 		$0.00551$  	&	$0.00723$\\

Mean &			$0.0324$  	&		$0.0318$  	&		$0.0330$\\

Median &		$0.0299$  	&		$0.0310$  	&		$0.0291$\\
StdDev	&	$0.0161$  &	$0.0144$ &		$0.024$\\

\hline

\end{tabular}

\caption{Statistics over influence measures across features.}

\label{table:stats}

\end{table}

% However, other interesting phenomena were observed: the ad with the highest influence for age was an advertisement for a conspiracy theory advocate; its target audience was almost exclusively men in the $55-65$ age group, but was uniformly presented to women of all ages. Ads that ranked highest for gender influence did not show any particular pattern; it seems that gender bias was due to noise.

To conclude, using a general distance measure between two value-vectors, we identify that language has the highest influence on ads. By using a more fine-grained distance function, we can single out one ad which demonstrates high influence for language. While in this case the bias is acceptable, the experiment suggests that our framework is effective in pinpointing biased or discriminatory ads. 
\section{Conclusions and Future Work}\label{sec:conclusions}
In this work, we analyze influence measures for classification tasks. Our influence measure is uniquely defined by a set of natural axioms, and is easily extended to other settings. The main advantage of our approach is the minimal knowledge we have of the classification algorithm. We show the applicability of our measure by analyzing the effects of user features on Google's display ads, despite having no knowledge of Google's classification algorithm (which, we suspect, is quite complex). % a careful analysis of its outputs provides some interesting insights as to its inner workings. 

Dataset classification is a useful application of our methods; however, our work applies to extensions of TU cooperative games where agents have more than two states (e.g. OCF games~\citep{ocfgeb}). 

The measure $\attvalue$ is trivially hard to compute exactly, since it generalizes the raw Banzhaf power index, for which this task is known to be hard~\citep{compcoopbook}. That said, both the Shapley and Banzhaf values can be approximated via random sampling~\citep{bachrach2010approx}. It is straightforward to show that random sampling provides good approximations for $\attvalue$ as well, assuming a binary classifier.

Our results can be extended in several ways. The measure $\attvalue$ is the number of times a change in a feature's state causes a change in the outcome. However, a partial dataset of observations may not contain any pair of vectors $\vec a,\vec a' \in B$, such that $\vec a' = (\vec a_{-i},b)$. In Section~\ref{sec:ads}, we control the dataset, so we ensure that all feature profiles appear. However, other datasets would not be as well-behaved. Extending our influence measure to accommodate non-immediate influence is an important step towards implementing our results to other classification domains. Indeed, the next step of our work is analyzing large-scale datasets, in order to better understand the ideas behind our influence measure.  

Finally, our experimental results, while encouraging, are illustrative rather than informative: they tell us that Google's display ads algorithm is clever enough to assign Spanish ads to Spanish speakers. 
Our experimental results enumerate the number of {\em displayed ads}; this is not necessarily indicative of users' clickthrough rates. Since our users are virtual entities, we are not able to measure their clickthrough rates; a broader experiment, where user profiles correspond to actual human subjects, would provide better insights into the effects user profiling has on display advertising.

\bibliographystyle{named}
\bibliography{abb,yair,datta-privacy}
\ifnum\Appendix=1

\onecolumn

\appendix
\setcounter{secnumdepth}{1}  

\section*{Appendix:\\Influence in Classification}

\medskip

\section{Proof of Theorem~\ref{thm:monotone-weight}}
We define $\pivotal_i(b) = \{\vec x \in [0,1]^n \mid v(\vec x) = 1, v(\vec x_{-i},b) = 0\}$, to be the set of all {\em pivotal} vectors (w.r.t. $b$), and $\antipivotal_i(b) = \{\vec x \in [0,1]^n \mid v(\vec x) = 0,v(\vec x_{-i},b) = 1\}$ to be the set of all {\em anti-pivotal vectors}. We write $\Piv_i = \left\{(\vec x,b) \in [0,1]^{n+1}\mid \vec x \in \Piv_i(b)\right\}$ and $\APiv_i = \left\{(\vec x,b) \in [0,1]^{n+1}\mid \vec x \in  \APiv_i(b)\right\}$. We note that $\vol(\Piv_i) = \vol(\APiv_i)$.
Given a point $(\vec x,b) \in \Piv_i$, we know that $v(\vec x) = 0$ but $v(\vec x_{-i},b) = 1$. Therefore, the point $((\vec x_{-i},b),x_i)$ is in $\APiv_i$. We conclude that 
\begin{align*}
\attvalue_i = &	\int_{b = 0}^1|\pivotal_i(b)|+ |\antipivotal_i(b)|\partial b \\
						= & \int_{b = 0}^1\vol(\pivotal_i(b))\partial b + \int_{b = 0}^1 \vol(\antipivotal_i(b))\partial b\\
						= & \vol(\Piv_i) + \vol(\APiv_i) = 2\vol(\Piv_i)
\end{align*} 
We begin by stating a few technical lemmas. Our objective is to establish some 
volume-preserving transformations between vectors for which $j$ is pivotal, and vectors for which $i$ is pivotal. 

Thus, to show that $\attvalue_i \ge \attvalue_j$ whenever $w_i \ge w_j > 0$, it suffices to show that $\vol(\Piv_i) \ge \vol(\Piv_j)$.
\begin{lemma}\label{lem:x_i>x_j}
Suppose that $w_i > w_j > 0$; if $\vec x \in \Piv_j(b)\setminus \Piv_i(b)$ then $x_i > x_j$.
\end{lemma}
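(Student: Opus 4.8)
The plan is to unpack every condition in ``$\vec x \in \Piv_j(b) \setminus \Piv_i(b)$'' into a linear inequality and then combine them algebraically. Write $s = \vec w \cdot \vec x$ for the score of $\vec x$. Membership $\vec x \in \Piv_j(b)$ records that $v(\vec x) = 0$ and $v(\vec x_{-j},b) = 1$; since $\vec w\cdot(\vec x_{-j},b) = s + w_j(b - x_j)$, these translate to $s < q$ and $s + w_j(b - x_j) \ge q$. Subtracting gives $w_j(b - x_j) \ge q - s > 0$, and because $w_j > 0$ this already forces $b > x_j$. First I would record these facts.

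Next I would exploit $\vec x \notin \Piv_i(b)$. Since $v(\vec x) = 0$ already holds (it is part of $\vec x\in\Piv_j(b)$), the only way $\vec x$ can fail to lie in $\Piv_i(b)$ is that $v(\vec x_{-i},b) = 0$, i.e. $s + w_i(b - x_i) < q$. Combining this with $s + w_j(b - x_j) \ge q$ and cancelling $s$ and $q$ leaves the single inequality $w_i(b - x_i) < w_j(b - x_j)$, whose right-hand side is already known to be positive.

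It then remains to deduce $x_i > x_j$, which I would do by a two-case split on the sign of $b - x_i$. If $x_i \ge b$, then $x_i \ge b > x_j$ and we are done. If $x_i < b$, then $b - x_i > 0$, so from $w_i(b - x_i) < w_j(b - x_j)$ together with $w_i > w_j$ we get $w_j(b - x_i) < w_i(b - x_i) < w_j(b - x_j)$; dividing by $w_j > 0$ yields $b - x_i < b - x_j$, i.e. $x_i > x_j$. This second case is the only place the relation $w_i > w_j$ enters.

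I do not anticipate a genuine obstacle: the lemma is a purely algebraic consequence of the definition of the linear classifier, and the only point requiring care is bookkeeping the directions of the inequalities, which is tied to the orientation of the $\Piv$/$\APiv$ convention. Downstream, the lemma guarantees that every vector witnessing pivotality for $j$ but not (yet) for $i$ satisfies $x_i > x_j$, which is precisely what a volume-preserving coordinate-swap argument needs in order to embed $\Piv_j$ into $\Piv_i$ and thereby conclude $\vol(\Piv_i) \ge \vol(\Piv_j)$, hence $\attvalue_i \ge \attvalue_j$ when $w_i > w_j > 0$.
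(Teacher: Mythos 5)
Your proposal is correct and is essentially the paper's argument: the paper proves the contrapositive (assuming $x_i \le x_j$ and using $b > x_j$, $w_i > w_j$ to show $\vec w\cdot(\vec x_{-i},b) \ge \vec w\cdot(\vec x_{-j},b) \ge q$, hence $\vec x \in \Piv_i(b)$), while you run the same score comparison forward from $\vec w\cdot(\vec x_{-i},b) < q \le \vec w\cdot(\vec x_{-j},b)$ with a small case split — the algebra is the same. You also correctly adopted the orientation $\vec x \in \Piv_j(b) \iff v(\vec x)=0,\ v(\vec x_{-j},b)=1$, which is what the paper's own proofs use despite the opposite sign in its displayed definition of $\Piv_i(b)$.
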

\begin{proof}
First, note that if $v(\vec x_{-j},b) = 1$ but $v(\vec x) = 0$, then $x_j < b$. 
Now, suppose that $x_i \le x_j$; we show that $(\vec x_{-j},b)\cdot \vec w \le (\vec x_{-i},b)\cdot \vec w$. Indeed, 
\begin{align*}
(\vec x_{-j},b)\cdot \vec w  \le& (\vec x_{-i},b)\cdot \vec w \iff\\
x_iw_i + bw_j \le& x_jw_j + bw_i\iff\\
x_iw_i - x_jw_j \le & b(w_i - w_j)
\end{align*}
Thus, we just need to show that $x_iw_i - x_jw_j \le b(w_i-w_j)$. Since $x_i \le x_j$, $x_i w_i - x_j w_j \le x_j(w_i - w_j)$, and since $w_i > w_j$, this is at most $b(w_i - w_j)$, as required. This means that if $x_i \le x_j$ then $\vec x \in \Piv_i(b)$, which concludes the first part of the proof.
\end{proof}
Let $f_{ij}:\R^n \to \R^n$ be the transformation
$$f_{ij}(\vec x)_k = \begin{cases}x_i & \mbox{if } k = j\\x_j & \mbox{if } k = i\\ x_k & \mbox{otherwise.}\end{cases}$$
\begin{lemma}\label{lem:f_ij}
If $\vec x \in \Piv_j(b) \setminus \Piv_i(b)$ then $f_{ij}(\vec x) \notin \Piv_j(b)\cup\APiv_j(b)$. 
\end{lemma}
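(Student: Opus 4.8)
The plan is to translate the hypotheses into linear inequalities in the coordinates of $\vec x$ and then read off the label of $f_{ij}(\vec x)$ from the sign of $w_i - w_j$ (recall we are in the regime $w_i > w_j > 0$ inherited from Lemma~\ref{lem:x_i>x_j}). Write $S = \sum_{k \ne i,j} x_k w_k$, so $\vec x\cdot\vec w = x_iw_i + x_jw_j + S$. From $\vec x \in \Piv_j(b)$ we get that $v(\vec x)=0$, i.e. $x_iw_i + x_jw_j + S < q$, together with $x_iw_i + bw_j + S \ge q$ (resetting coordinate $j$ to $b$ flips the label); comparing these and using $w_j>0$ forces $x_j < b$. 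Since $v(\vec x)=0$, the assumption $\vec x \notin \Piv_i(b)$ is equivalent to $v(\vec x_{-i},b)=0$, i.e. $bw_i + x_jw_j + S < q$. Finally, Lemma~\ref{lem:x_i>x_j} gives $x_i > x_j$.

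Next I would observe that $\Piv_j(b)\cup\APiv_j(b)$ is exactly the set of $\vec y \in [0,1]^n$ with $v(\vec y)\neq v(\vec y_{-j},b)$, so it suffices to prove $v(f_{ij}(\vec x)) = v\big((f_{ij}(\vec x))_{-j},b\big)$; in fact I claim both equal $0$. For the first, $f_{ij}(\vec x)\cdot\vec w = x_jw_i + x_iw_j + S$, so $f_{ij}(\vec x)\cdot\vec w - \vec x\cdot\vec w = (x_j - x_i)(w_i - w_j) < 0$ since $x_j < x_i$ and $w_i > w_j$; hence $f_{ij}(\vec x)\cdot\vec w < \vec x\cdot\vec w < q$ and $v(f_{ij}(\vec x))=0$. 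For the second, the vector obtained from $f_{ij}(\vec x)$ by resetting coordinate $j$ to $b$ carries $x_j$ in coordinate $i$ and $b$ in coordinate $j$, whereas $(\vec x_{-i},b)$ carries $b$ in coordinate $i$ and $x_j$ in coordinate $j$; they agree elsewhere, and the difference of their weighted sums is $(x_j - b)(w_i - w_j) < 0$ since $x_j < b$ and $w_i > w_j$. Thus the weighted sum of $(f_{ij}(\vec x))_{-j}$ with coordinate $j$ set to $b$ is strictly below $bw_i + x_jw_j + S < q$, so its label is $0$ as well. As both labels equal $0$, $f_{ij}(\vec x)\notin\Piv_j(b)\cup\APiv_j(b)$.

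The computation is short, and the only thing needing care is the bookkeeping of which coordinate carries which value after applying $f_{ij}$ and after additionally clamping coordinate $j$ to $b$ — there is no genuine obstacle. The one point worth highlighting is that both relevant gaps in weighted sums have the form (a difference of two coordinates)$\,\cdot\,(w_i - w_j)$, so the argument truly relies on $w_i > w_j$ together with the two inequalities $x_j < x_i$ (from Lemma~\ref{lem:x_i>x_j}) and $x_j < b$ (a consequence of $w_j>0$ and $\vec x\in\Piv_j(b)$), each of which is used in exactly one of the two steps. Establishing that both labels are $0$ simultaneously rules out membership in $\Piv_j(b)$ and in $\APiv_j(b)$ at once, so no separate case analysis for $\APiv_j(b)$ is required.
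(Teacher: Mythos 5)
Your proof is correct and takes essentially the same route as the paper's: both arguments show that $v(f_{ij}(\vec x))=0$ and $v\bigl((f_{ij}(\vec x))_{-j},b\bigr)=0$ by comparing weighted sums via the products $(x_j-x_i)(w_i-w_j)<0$ and $(x_j-b)(w_i-w_j)<0$, together with $v(\vec x)=0$ and $v(\vec x_{-i},b)=0$. Your reading of $\Piv_j(b)$ (namely $v(\vec x)=0$ and $v(\vec x_{-j},b)=1$) is the sign convention the paper's own proofs actually use, so no discrepancy arises.
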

\begin{proof}
First, note that $(b- x_j)(w_i - w_j) > 0$; this is because $b > x_j$ and $w_i > w_j$. This implies that $x_jw_i + bw_j < bw_i + x_jw_j$. Now, since $v(\vec x_{-i},b)  = 0$, we know that $bw_i + x_jw_j < q - \sum_{k \ne i,j}x_k w_k$; therefore, $\vec w \cdot (f_{ij}(\vec x)_{-j},b)= \sum_{k \ne i,j}x_k w_k + x_jw_i + bw_j < q$, and $v(f_{ij}(\vec x)_{-j},b) = 0$. This implies that $f_{ij}(\vec x) \notin \Piv_j(b)$. 

Now, $(x_i - x_j)(w_i - w_j) > 0$ since $x_i > x_j$ by Lemma~\ref{lem:x_i>x_j}. Therefore, $x_jw_i + x_iw_j < x_iw_i + x_jw_j < q - \sum_{k \ne i,j}x_kw_k$, which implies that $\vec w \cdot f_{ij}(\vec x) < q$, hence $v(f_{ij}(\vec x)) = 0$. In particular, $f_{ij}(\vec x) \notin \APiv_j(b)$.
\end{proof}
\begin{lemma}\label{lem:x_i> b > x_j}
Suppose $w_i > w_j > 0$ and that $\vec x \in \Piv_j(b)\setminus \Piv_i(b)$; if $f_{ij}(\vec x) \notin \Piv_i(b)$ then $x_i \ge b > x_j$ .
\end{lemma}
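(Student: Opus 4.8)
The plan is to prove the two inequalities $x_i \ge b$ and $b > x_j$ separately, the latter being essentially immediate. Working with the conventions used in Lemmas~\ref{lem:x_i>x_j} and~\ref{lem:f_ij}, $\vec x \in \Piv_j(b)$ means $v(\vec x) = 0$ and $v(\vec x_{-j},b) = 1$, so $\vec x\cdot\vec w < q \le (\vec x_{-j},b)\cdot\vec w$; since the two vectors differ only in coordinate $j$, subtracting gives $(b - x_j)w_j > 0$, and $w_j > 0$ yields $b > x_j$. Thus the substance of the lemma is the inequality $x_i \ge b$.

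For that, I would first pin down the status of $f_{ij}(\vec x)$. By Lemma~\ref{lem:x_i>x_j} we have $x_i > x_j$, so $(x_i - x_j)(w_i - w_j) > 0$, i.e. $x_j w_i + x_i w_j < x_i w_i + x_j w_j$; combined with $v(\vec x) = 0$, i.e. $\sum_{k\ne i,j}x_k w_k + x_i w_i + x_j w_j < q$, this gives $\vec w\cdot f_{ij}(\vec x) = \sum_{k\ne i,j}x_k w_k + x_j w_i + x_i w_j < q$, hence $v(f_{ij}(\vec x)) = 0$ (this is the same computation that already appears in the proof of Lemma~\ref{lem:f_ij}). Therefore whether $f_{ij}(\vec x)$ lies in $\Piv_i(b)$ is decided solely by the value of $v$ on the vector obtained from $f_{ij}(\vec x)$ by resetting coordinate $i$ to $b$; since we are assuming $f_{ij}(\vec x)\notin\Piv_i(b)$, I conclude $v(f_{ij}(\vec x)_{-i},b) = 0$.

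The last step is a single subtraction. From $v(\vec x_{-j},b) = 1$ we get $\sum_{k\ne i,j}x_k w_k + x_i w_i + b\,w_j \ge q$, and from $v(f_{ij}(\vec x)_{-i},b) = 0$ we get $\sum_{k\ne i,j}x_k w_k + x_i w_j + b\,w_i < q$ (recall that $f_{ij}$ moves $x_i$ into coordinate $j$, and coordinate $i$ has been reset to $b$). Subtracting the second from the first gives $(x_i - b)(w_i - w_j) > 0$, and since $w_i > w_j$ this forces $x_i > b$. Together with $b > x_j$ from the first paragraph we obtain $x_i \ge b > x_j$, as required.

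I expect the only real hazard here to be bookkeeping rather than anything conceptual: one has to keep careful track of where each coordinate of $\vec x$ ends up after applying $f_{ij}$ and after the substitution $(\cdot)_{-i},b$, and of which side of $q$ each of the four vectors $\vec x$, $f_{ij}(\vec x)$, $(\vec x_{-j},b)$ and $(f_{ij}(\vec x)_{-i},b)$ lands on. Once those are aligned correctly the inequalities drop out immediately.
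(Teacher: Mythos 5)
Your argument is correct and is essentially the paper's proof run in the forward direction: the paper assumes $b > x_i$ and derives $f_{ij}(\vec x) \in \Piv_i(b)$ from $(b-x_i)(w_i-w_j)>0$, whereas you start from $f_{ij}(\vec x)\notin\Piv_i(b)$ and extract $(x_i-b)(w_i-w_j)>0$ from the same comparison of $\vec w\cdot(\vec x_{-j},b)$ and $\vec w\cdot(f_{ij}(\vec x)_{-i},b)$. The only additional content is that you spell out $v(f_{ij}(\vec x))=0$ (needed to convert non-membership into $v(f_{ij}(\vec x)_{-i},b)=0$), a step the paper inherits implicitly from the computation in Lemma~\ref{lem:f_ij}.
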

\begin{proof}
Suppose that $b > x_i > x_j$. We note that $(b - x_i)(w_i - w_j) > 0$, which implies that $bw_i + x_iw_j > x_i w_i + b w_j \ge q - \sum_{k \ne i,j}x_kw_k$. Hence, $\vec w\cdot (f_{ij}(\vec x)_{-i},b)> q$, which implies that $f_{ij}(\vec x) \in \Piv_i(b)$. Thus, if $f_{ij}(\vec x) \notin \Piv_i(b)$, it must be the case that $x_i \ge b > x_j$.
\end{proof}
Given some $\vec x\in [0,1]^n$ and some $b \in [0,1]$, we define $g_{ij}:[0,1]^n\times [0,1]\to[0,1]^n$ as follows:
\begin{align*}
g_{ij}(\vec x,b)_k = \begin{cases}x_j & \mbox{if } k = i\\ b & \mbox{if } k = j\\ x_k & \mbox{otherwise.}\end{cases}
\end{align*}
\begin{lemma}\label{lem:g_ij}
If $\vec x \in \Piv_j(b) \setminus \Piv_i(b)$ and $f_{ij}(\vec x) \notin \Piv_i(b)$, then $g_{ij}(\vec x,b) \in \Piv_i(x_i)\setminus (\Piv_j(x_i)\cup \APiv_j(x_i))$.
\end{lemma}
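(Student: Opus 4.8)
The plan is to verify the two membership claims separately, using the structural facts accumulated in Lemmas~\ref{lem:x_i>x_j}, \ref{lem:f_ij}, and \ref{lem:x_i> b > x_j}. Throughout, write $\vec x \in \Piv_j(b)\setminus\Piv_i(b)$ and assume $f_{ij}(\vec x)\notin\Piv_i(b)$, so that by Lemma~\ref{lem:x_i>x_j} we have $x_i > x_j$, and by Lemma~\ref{lem:x_i> b > x_j} we have $x_i \ge b > x_j$. Note first that $g_{ij}(\vec x,b)$ is obtained from $\vec x$ by setting coordinate $i$ to $x_j$ and coordinate $j$ to $b$; since $x_i\ge b > x_j$, both new coordinate values lie in $[0,1]$, so $g_{ij}(\vec x,b)\in[0,1]^n$ is well-defined. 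The key bookkeeping observation is that $(g_{ij}(\vec x,b)_{-i},x_i)$ — i.e.\ replacing coordinate $i$ of $g_{ij}(\vec x,b)$ by $x_i$ — recovers exactly the vector $(\vec x_{-j},b)$, and $g_{ij}(\vec x,b)$ itself has the same weighted sum profile as a perturbation of $(\vec x_{-i},b)$; so the hypotheses $v(\vec x)=0$, $v(\vec x_{-j},b)=1$, $v(\vec x_{-i},b)=0$ translate directly into statements about $g_{ij}(\vec x,b)$.

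\emph{Step 1: $g_{ij}(\vec x,b)\in\Piv_i(x_i)$.} This requires $v(g_{ij}(\vec x,b)) = 0$ and $v(g_{ij}(\vec x,b)_{-i},x_i) = 1$. The second condition is immediate: $(g_{ij}(\vec x,b)_{-i},x_i)$ has coordinate $j$ equal to $b$, coordinate $i$ equal to $x_i$, and all other coordinates equal to those of $\vec x$, hence equals $(\vec x_{-j},b)$, for which $v = 1$ by assumption. For the first condition, compute $\vec w\cdot g_{ij}(\vec x,b) = \sum_{k\ne i,j}x_kw_k + x_jw_i + bw_j$. Using $(b - x_j)(w_i - w_j) > 0$ (which holds since $b > x_j$ and $w_i > w_j$), we get $x_jw_i + bw_j < bw_i + x_jw_j$; and since $v(\vec x_{-i},b) = 0$ gives $bw_i + x_jw_j < q - \sum_{k\ne i,j}x_kw_k$, we conclude $\vec w\cdot g_{ij}(\vec x,b) < q$, so $v(g_{ij}(\vec x,b)) = 0$. (This is essentially the computation already carried out in the proof of Lemma~\ref{lem:f_ij}.)

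\emph{Step 2: $g_{ij}(\vec x,b)\notin\Piv_j(x_i)\cup\APiv_j(x_i)$.} Since we have just shown $v(g_{ij}(\vec x,b)) = 0$, membership in $\Piv_j(x_i)$ would require $v(g_{ij}(\vec x,b)) = 1$, which fails; so $g_{ij}(\vec x,b)\notin\Piv_j(x_i)$ automatically. For $\APiv_j(x_i)$, we must rule out $v(g_{ij}(\vec x,b)_{-j},x_i) = 1$. Now $(g_{ij}(\vec x,b)_{-j},x_i)$ has coordinate $i$ equal to $x_j$, coordinate $j$ equal to $x_i$, and all other coordinates as in $\vec x$, so $\vec w\cdot(g_{ij}(\vec x,b)_{-j},x_i) = \sum_{k\ne i,j}x_kw_k + x_jw_i + x_iw_j$. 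Since $x_i > x_j$ and $w_i > w_j$ give $(x_i - x_j)(w_i - w_j) > 0$, i.e.\ $x_jw_i + x_iw_j < x_iw_i + x_jw_j$, and since $v(\vec x) = 0$ gives $x_iw_i + x_jw_j < q - \sum_{k\ne i,j}x_kw_k$, we obtain $\vec w\cdot(g_{ij}(\vec x,b)_{-j},x_i) < q$, hence $v(g_{ij}(\vec x,b)_{-j},x_i) = 0$ and $g_{ij}(\vec x,b)\notin\APiv_j(x_i)$. This completes the proof.

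The main obstacle is purely notational rather than mathematical: one has to be careful about which coordinate of $g_{ij}(\vec x,b)$ is being overwritten in expressions like $(g_{ij}(\vec x,b)_{-i},x_i)$ and $(g_{ij}(\vec x,b)_{-j},x_i)$, and to correctly identify these with the vectors $(\vec x_{-j},b)$ and a swap-type perturbation of $\vec x$ respectively. Once the correspondence is set up, the two inequality chains are exactly the two computations already appearing in the proof of Lemma~\ref{lem:f_ij}, reused verbatim — indeed, this lemma is essentially a repackaging of Lemma~\ref{lem:f_ij} together with the constraint $x_i\ge b > x_j$ from Lemma~\ref{lem:x_i> b > x_j} that guarantees $g_{ij}(\vec x,b)$ stays in the cube.
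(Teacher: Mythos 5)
Your proof is correct and takes essentially the same route as the paper's: identify $(g_{ij}(\vec x,b)_{-i},x_i)=(\vec x_{-j},b)$ and $(g_{ij}(\vec x,b)_{-j},x_i)=f_{ij}(\vec x)$, then reuse the inequalities $(b-x_j)(w_i-w_j)>0$ and $(x_i-x_j)(w_i-w_j)>0$ from Lemma~\ref{lem:f_ij} together with $v(\vec x_{-i},b)=0$ to pin down the values $v(g_{ij}(\vec x,b))=0$, $v(\vec x_{-j},b)=1$, and $v(f_{ij}(\vec x))=0$. The only blemish is that Step 2 silently swaps the $\Piv$/$\APiv$ convention relative to Step 1 (which fact excludes $\Piv_j(x_i)$ versus $\APiv_j(x_i)$ is reversed -- an inconsistency already present between the paper's displayed definitions and their use in the proofs), but since you prove both $v(g_{ij}(\vec x,b))=0$ and $v(f_{ij}(\vec x))=0$, exclusion from $\Piv_j(x_i)\cup\APiv_j(x_i)$ follows under either reading and the argument stands.
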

\begin{proof}
First, we observe that $(g_{ij}(\vec x,b)_{-i},x_i) = (\vec x_{-j},b)$, and that $(g_{ij}(\vec x,b)_{-j},x_i) = f_{ij}(\vec x)$. As observed in Lemma~\ref{lem:f_ij}, if $x_i > x_j$ then $v(f_{ij}(\vec x)) = 0$. Therefore, $g_{ij}(\vec x,b) \notin \Piv_j(x_j)$. Moreover, since $\vec x \in \Piv_j(b)$, $v(g_{ij}(\vec x,b)) = 1$, so $g_{ij}(\vec x,b) \in \Piv_i(x_i)$. 
On the other hand, $(b - x_j)(w_i - w_j) > 0$, so $x_jw_i + bw_j < bw_i + x_jw_j < q - \sum_{k \ne i,j}x_kw_k$, so $g_{ij}(\vec x,b)\cdot \vec w < q$. This means that $g_{ij}(\vec x,b) \notin \APiv_j(x_i)$.
\end{proof}

Given a set $S \subseteq \R^m$ and a function $f:\R^m \to \R^m$, we define $f(S) = \{f(\vec s) \mid \vec s \in S\}$. We can extend $f_{ij}$ and $g_{ij}$ defined above to functions from $\R^{n+1}$ to $\R^{n+1}$ as follows. Given a point $(\vec x,b) \in \R^{n+1}$, we define $F_{ij}(\vec x,b) = (f_{ij}(\vec x),b)$, and $G_{ij}(\vec x,b) = (g_{ij}(\vec x,b),x_i)$. We note that both $F_{ij}$ and $G_{ij}$ merely swap coordinates in their inputs, thus they preserve distances: 
$$d(G_{ij}(\vec x,b), G_{ij}(\vec y,c)) = d((\vec x,b),(\vec y,c))$$ 
for any metric $d$. Isoperimetric transformations are known to preserve volume: if $I:\R^m \to \R^m$ is an isoperimetry, then $\vol(S) = \vol(I(S))$ for any $S \subseteq \R^m$.

\begin{theorem}\label{thm:piv}
If $w_i \ge w_j>  0$ then $\vol(\Piv_j) \le \vol(\Piv_i)$.
\end{theorem}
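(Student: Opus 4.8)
The goal is to show $\vol(\Piv_j) \le \vol(\Piv_i)$ whenever $w_i \ge w_j > 0$. The natural strategy, given the technical lemmas that have been set up, is to exhibit a measure-preserving injection from $\Piv_j$ into $\Piv_i$. I would first dispose of the boundary case $w_i = w_j$: here $f_{ij}$ is a volume-preserving bijection between $\Piv_j$ and $\Piv_i$ (it merely swaps the roles of two features with equal weight in the classifier), so assume from now on $w_i > w_j > 0$.

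Next I would partition $\Piv_j$ (or rather its lifted version $\Piv_j = \{(\vec x, b) : \vec x \in \Piv_j(b)\}$) according to whether the point also lies in $\Piv_i(b)$. On $\Piv_j \cap \Piv_i$ there is nothing to do — these points already ``belong'' to $\Piv_i$. On the remaining piece $\Piv_j \setminus \Piv_i$, further split according to whether $F_{ij}(\vec x, b) \in \Piv_i$. Where $F_{ij}(\vec x,b) \in \Piv_i(b)$, use the map $F_{ij}$: by Lemma~\ref{lem:f_ij}, $f_{ij}(\vec x) \notin \Piv_j(b) \cup \APiv_j(b)$, so the image avoids $\Piv_j$, and $F_{ij}$ is distance- (hence volume-) preserving since it only swaps coordinates. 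Where $F_{ij}(\vec x,b) \notin \Piv_i(b)$, Lemma~\ref{lem:x_i> b > x_j} gives $x_i \ge b > x_j$, and Lemma~\ref{lem:g_ij} then shows $G_{ij}(\vec x,b) \in \Piv_i(x_i) \setminus (\Piv_j(x_i) \cup \APiv_j(x_i))$; again $G_{ij}$ just permutes coordinates, so it is volume-preserving. Thus I build a piecewise map $\Phi: \Piv_j \to \Piv_i$ that is the identity on $\Piv_j \cap \Piv_i$, is $F_{ij}$ on one part of $\Piv_j \setminus \Piv_i$, and is $G_{ij}$ on the other part.

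The main obstacle — and the place where the argument needs genuine care rather than bookkeeping — is verifying that $\Phi$ is \emph{injective}, i.e. that the three pieces have disjoint images and that each individual piece maps injectively. Each piece is injective because $F_{ij}$ and $G_{ij}$ are (on the relevant domains) just coordinate permutations, and the partition constraints ($x_i \ge b > x_j$ for the $G_{ij}$ piece, etc.) let one recover the preimage. For disjointness of images: the $F_{ij}$-image lies in $\Piv_i \setminus \Piv_j$ while the identity piece lies in $\Piv_i \cap \Piv_j$, so those two are disjoint; one must then check that the $G_{ij}$-image is disjoint from both — here the key is that Lemma~\ref{lem:g_ij} places the $G_{ij}$-image outside $\Piv_j$, and one further uses the coordinate relations ($b$ now sits in coordinate $j$, $x_j$ in coordinate $i$, and the ``new'' $b$-value is $x_i$) to separate it from the $F_{ij}$-image. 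Once injectivity of $\Phi$ is established, since $\Phi$ is volume-preserving on each piece and the pieces partition $\Piv_j$, we get $\vol(\Piv_j) = \vol(\Phi(\Piv_j)) \le \vol(\Piv_i)$, completing the proof. Finally, one remarks that combined with $\attvalue_i = 2\vol(\Piv_i)$ this immediately yields the ``if'' direction of Theorem~\ref{thm:monotone-weight}, and the ``only if'' direction follows by contrapositive.
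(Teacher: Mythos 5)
Your plan is correct and is essentially the paper's own argument: the same three-part partition of $\Piv_j$ (the intersection with $\Piv_i$ handled by the identity, the points sent into $\Piv_i$ by $F_{ij}$, and the remainder sent by $G_{ij}$), the same use of Lemmas~\ref{lem:f_ij} and~\ref{lem:g_ij}, and the same key disjointness check of the $F_{ij}$- and $G_{ij}$-images via the coordinate relations, followed by the volume-preservation of coordinate permutations. Your explicit handling of the boundary case $w_i = w_j$ is a minor addition the paper leaves implicit.
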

\begin{proof}

We partition $\Piv_j$ as follows. We denote 
\begin{align*}
A_{ij} &= \Piv_j\cap \Piv_i,\\
B_{ij} &= \left\{(\vec x,b) \in \Piv_j \setminus \Piv_i\mid (f_{ij}(\vec x),b) \in \Piv_i\right\}, \mbox{ and}\\
C_{ij} &= \left\{(\vec x,b)\in \Piv_j\setminus \Piv_i \mid (f_{ij}(\vec x),b) \notin\Piv_i\right\}.
\end{align*}
Clearly, $A_{ij},B_{ij}$ and $C_{ij}$ partition $\Piv_j$. 

According to Lemma~\ref{lem:f_ij}, $F_{ij}(B_{ij}) \subseteq \Piv_i\setminus \Piv_j$. Now, let us observe $C_{ij}$. According to Lemma~\ref{lem:g_ij}, $G_{ij}(C_{ij}) \subseteq \Piv_i\setminus \Piv_j$. 
It remains to show that $F_{ij}(B_{ij}) \cap G_{ij}(C_{ij}) = \emptyset$. 
Suppose that there are some $(\vec x,b) \in B_{ij},(\vec z,c) \in C_{ij}$ such that $(f_{ij}(\vec x),b) = (g_{ij}(\vec z,c),z_i)$. This means that $(\vec z,c) = ((\vec x_{-i},b),x_i)$. 
To prove a contradiction, it suffices to show that if $(\vec x,b) \in B_{ij}$ then we have that $((\vec x_{-i},b),x_i) \notin C_{ij}$. 
In order to be in $C_{ij}$, it must be the case that $f_{ij}(\vec x_{-i},b) \notin \Piv_i(x_i)$; we show that $f_{ij}(\vec x_{-i},b) \in \Piv_i(x_i)$. First, let us write $f_{ij}(\vec x_{-i},b) = \vec y$. We note that $y_k = x_k$ for all $k \ne i,j$, that $y_j = b$, and that $y_i = x_j$. Since $b > x_j$, it must be the case that $(b - x_j)(w_i - w_j) > 0$, hence $bw_i + x_jw_j > x_j w_i + bw_j$. Therefore, $\vec w\cdot \vec y < \vec w\cdot (\vec x_{-i},b)$. Now, since $(\vec x,b) \in \Piv_j\setminus \Piv_i$, it must be the case that $v(\vec x_{-i},b) = 0$, i.e. that $\vec w\cdot (\vec x_{-i},b) < q$. This means that $v(\vec y) = 0$. 
We now show that $v(\vec y_{-i},x_i) = 1$. Since $y_i = x_j$ and $y_j = b$, $(\vec y_{-i},x_i) = (\vec x_{-j},b)$. Since $(\vec x,b) \in \Piv_j$, $v(\vec y_{-i},x_i) = v(\vec x_{-j},b) = 1$. Therefore, $\vec y \in \Piv_i$, and thus $((\vec x_{-i},b),x_i)\notin C_{ij}$. We conclude that indeed $F_{ij}(B_{ij}) \cap G_{ij}(C_{ij}) = \emptyset$.

To conclude, 
\begin{align*}
\vol(\Piv_j) 	= & \vol(A_{ij}) + \vol(B_{ij}) + \vol(C_{ij})\\
							= & \vol(A_{ij}) + \vol(F_{ij}(B_{ij})) + \vol(G_{ij}(C_{ij}))\\ \le &\vol(\Piv_i)
\end{align*}
which concludes the proof. 
\end{proof}
\begin{corollary}\label{cor:monotone-weight}
Let $\cal G = \tup{N,[0,1]^n,v}$ be a game where $v$ is a linear separator given by $\vec w$ and $q$. If $w_i \ge w_j > 0$ then $\attvalue_i(\cal G) \ge \attvalue_j(\cal G)$.
\end{corollary}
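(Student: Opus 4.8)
The plan is to reduce the corollary entirely to Theorem~\ref{thm:piv}, using the identity $\attvalue_i(\cal G) = 2\vol(\Piv_i)$ that was derived in the discussion preceding that theorem. Recall the derivation: the map $(\vec x, b)\mapsto\bigl((\vec x_{-i},b),x_i\bigr)$ is a coordinate swap, hence an isometry, and it carries $\Piv_i$ onto $\APiv_i$; therefore $\vol(\Piv_i)=\vol(\APiv_i)$, and so
\[
\attvalue_i(\cal G)=\int_{b=0}^1\bigl(|\Piv_i(b)|+|\APiv_i(b)|\bigr)\,\partial b=\vol(\Piv_i)+\vol(\APiv_i)=2\vol(\Piv_i),
\]
and likewise $\attvalue_j(\cal G)=2\vol(\Piv_j)$.

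With these two identities in hand, the argument is immediate: assume $w_i\ge w_j>0$. Theorem~\ref{thm:piv} gives $\vol(\Piv_j)\le\vol(\Piv_i)$, so multiplying through by $2$ yields $\attvalue_j(\cal G)=2\vol(\Piv_j)\le 2\vol(\Piv_i)=\attvalue_i(\cal G)$, which is exactly the claimed inequality. The only point that merits an extra sentence is the boundary case $w_i=w_j$: here $\Piv_j\setminus\Piv_i=\emptyset$, so the sets $B_{ij}$ and $C_{ij}$ in the proof of Theorem~\ref{thm:piv} are empty and the bound degenerates to $\vol(\Piv_j)=\vol(\Piv_i)$; alternatively one can simply invoke feature symmetry (Sym), since relabeling $i$ and $j$ leaves $\vec w$, $q$, and hence $v$ unchanged up to the relabeling, forcing $\attvalue_i(\cal G)=\attvalue_j(\cal G)$.

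I do not expect any real obstacle in the corollary itself: all of the technical content — constructing the volume-preserving transformations $F_{ij}$ and $G_{ij}$, verifying via Lemmas~\ref{lem:x_i>x_j}--\ref{lem:g_ij} that they map the relevant pieces of $\Piv_j$ into $\Piv_i\setminus\Piv_j$, and checking that $F_{ij}(B_{ij})$ and $G_{ij}(C_{ij})$ are disjoint — is already carried out in the proof of Theorem~\ref{thm:piv}. The corollary is just the bookkeeping step that translates the pivot-set volume comparison into the statement about the influence measure $\attvalue$, and the translation is the $\attvalue_i=2\vol(\Piv_i)$ identity recalled above.
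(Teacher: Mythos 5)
Your proof is correct and matches the paper's intent exactly: the corollary is stated there as an immediate consequence of Theorem~\ref{thm:piv} together with the identity $\attvalue_i(\cal G)=2\vol(\Piv_i)$ derived at the start of the appendix, which is precisely the reduction you give. Your extra remark handling the boundary case $w_i=w_j$ (via symmetry or the degenerate partition) is a harmless refinement that the paper leaves implicit.
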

Corollary~\ref{cor:monotone-weight} shows that $\attvalue$ is monotone in feature weights. a complementary result shows that increasing a feature's weight would result in an increase in influence. 
Next, we show that Corollary~\ref{cor:monotone-weight} holds even when weights are negative. 
\begin{lemma}\label{lem:2player-monotoneweight}
Let $\cal G = \tup{\{1,2\},[0,1]^2,v}$ be a 2-feature linear separator with $w_1 \ge 0$ and $w_2 < 0$. Then $\attvalue_1 (\cal G) > \attvalue_2(\cal G)$ if and only if $|w_1| > |w_2|$.
\end{lemma}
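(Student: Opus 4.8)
The plan is to reduce to the all‑positive‑weights case already handled by Corollary~\ref{cor:monotone-weight}, and then upgrade the resulting non‑strict inequalities to the strict equivalence by an explicit two‑dimensional computation. \textbf{Step 1: reflection reduction.} Since $w_2<0$, set $\cal{G}'=\tup{\{1,2\},[0,1]^2,v'}$, where $v'$ is the linear separator with weight vector $(w_1,-w_2)$ and threshold $q-w_2$; a one‑line check gives $v(x_1,x_2)=v'(x_1,1-x_2)$. The map $(x_1,x_2)\mapsto(x_1,1-x_2)$ is a volume‑preserving involution of $[0,1]^2$, and applying it to the data coordinates — and, for feature $2$, also to the pivot coordinate $b$ — carries the integrand $|v(\cdot)-v(\cdot)|$ defining $\attvalue_1$ (resp.\ $\attvalue_2$) to the corresponding integrand for $v'$. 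Hence $\attvalue_1(\cal{G})=\attvalue_1(\cal{G}')$ and $\attvalue_2(\cal{G})=\attvalue_2(\cal{G}')$, and $\cal{G}'$ has weights $|w_1|$ and $|w_2|$. If $w_1=0$, then feature $1$ is a dummy, so $\attvalue_1(\cal{G})=0\le\attvalue_2(\cal{G})$ while $|w_1|=0<|w_2|$: both sides of the claimed equivalence are false and we are done. So from now on assume $w_1>0$; then $\cal{G}'$ has two strictly positive weights.

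\textbf{Step 2: non‑strict inequalities.} Apply Corollary~\ref{cor:monotone-weight} to the two‑feature game $\cal{G}'$ in both directions: $|w_1|\ge|w_2|$ gives $\attvalue_1(\cal{G}')\ge\attvalue_2(\cal{G}')$, and $|w_2|\ge|w_1|$ gives $\attvalue_2(\cal{G}')\ge\attvalue_1(\cal{G}')$ (for $|w_1|=|w_2|$ one may instead invoke feature symmetry of $\attvalue$ via the swap $(x_1,x_2)\mapsto(x_2,x_1)$). Combined with Step~1 this yields $|w_1|=|w_2|\Rightarrow\attvalue_1(\cal{G})=\attvalue_2(\cal{G})$ and $|w_1|<|w_2|\Rightarrow\attvalue_1(\cal{G})\le\attvalue_2(\cal{G})$; by contraposition this already establishes the ``only if'' direction. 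All that remains is the strict implication $|w_1|>|w_2|\Rightarrow\attvalue_1(\cal{G})>\attvalue_2(\cal{G})$.

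\textbf{Step 3: strictness.} I would prove this by computing $\attvalue_i(\cal{G}')$ explicitly. Starting from $\attvalue_i(\cal{G}')=2\vol(\Piv_i)$ (from the proof of Theorem~\ref{thm:monotone-weight}) and freezing the coordinate other than $i$, the inner integral collapses to a one‑dimensional pivot count on the single free coordinate, which evaluates to $2h(t)$, where $h(t)=t(1-t)$ for $t\in[0,1]$ and $h(t)=0$ otherwise; a linear change of variable then gives $\attvalue_1(\cal{G}')=2\tfrac{w_1'}{w_2'}\bigl(H(q'/w_1')-H((q'-w_2')/w_1')\bigr)$, and symmetrically for $\attvalue_2(\cal{G}')$, where $H$ is the antiderivative of $h$ ($H(u)=u^2/2-u^3/3$ on $[0,1]$, constant outside). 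After normalizing $w_1'=1$ and writing $w=w_2'/w_1'\in(0,1)$, the inequality $\attvalue_1(\cal{G}')>\attvalue_2(\cal{G}')$ becomes, after one further substitution, a comparison of integrals of $h$ over windows of widths $w$ and $1/w$, which for each position of $q'$ relative to the breakpoints $0,w,1,1+w$ reduces to the hypothesis $w<1$. This is the crux. Two points require care: (i) the truncation of $h$ at $0$ and $1$ forces a short case split (roughly $q'<w$, $w\le q'\le 1$, $q'>1$); and (ii) the lemma tacitly assumes $v$ is non‑constant — equivalently $0<q'<w_1'+w_2'$ — since otherwise both features are dummies and $\attvalue_1(\cal{G})=\attvalue_2(\cal{G})=0$ regardless of the weights, so strictness genuinely fails. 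A more geometric alternative to Step~3 is to revisit the partition in the proof of Theorem~\ref{thm:piv}: there $\Piv_2$ splits into $A_{12},B_{12},C_{12}$ whose volume‑preserving images sit disjointly inside $\Piv_1$, so strictness amounts to producing a positive‑measure subset of $\Piv_1$ missed by $A_{12}\cup F_{12}(B_{12})\cup G_{12}(C_{12})$, and non‑degeneracy is precisely what guarantees such a region exists. Either way, pinning down this leftover set (or, in the computational route, the boundary cases) is the only real obstacle.
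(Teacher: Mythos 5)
Your plan is correct, and it reaches the lemma by a genuinely different route than the paper. The paper proves both directions of the equivalence by one brute-force computation carried out directly with $w_2<0$: it splits on the sign of $q$ and on whether $q$ is above or below $w_1+w_2$, evaluates $\attvalue_1$ and $\attvalue_2$ as explicit cubics in each regime, and compares the closed forms. You instead (i) use the reflection $x_2\mapsto 1-x_2$, $q\mapsto q-w_2$ to reduce to a separator with both weights positive --- a continuous instance of state symmetry the paper never exploits, which merges its $q\ge 0$ and $q<0$ cases; (ii) get the ``only if'' direction for free from Theorem~\ref{thm:piv}/Corollary~\ref{cor:monotone-weight}, where the paper re-derives it by computation; and (iii) confine the explicit work to strictness, packaged through the truncated kernel $h(t)=t(1-t)$ and its antiderivative $H$. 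Your formula for $\attvalue_i(\cal G')$ is right, and the case analysis you defer does close as predicted: after normalizing $w_1'=1$, $w=w_2'\in(0,1)$, the middle regime $w\le q'\le 1$ reduces to $\int_{q'-w}^{q'} s(1-s)\,ds > w^2/6$, whose worst case $\int_0^w s(1-s)\,ds = w^2/2-w^3/3$ exceeds $w^2/6$ exactly when $w<1$; in the regime $0<q'<w$ the difference of the two sides is $\tfrac{1}{3}\alpha^3 w(1-w)>0$ with $q'=\alpha w$; and $q'>1$ follows from $0<q'<w$ via the win/lose-swapping reflection $q'\mapsto 1+w-q'$. So the crux you flag is real but routine. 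Your caveat about degeneracy is also well taken and is the one place the paper itself is imprecise: when the separator is constant on $[0,1]^2$ both influences vanish while $|w_1|>|w_2|$ may still hold, so the stated equivalence needs a non-constancy assumption; the paper dismisses exactly these cases with ``the claim trivially holds.'' What the paper's approach buys is self-contained closed forms for $\attvalue_1,\attvalue_2$; what yours buys is less computation, reuse of Theorem~\ref{thm:piv}, and a sharper view of where strictness can fail.
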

\begin{proof}
We begin by assuming that $q \ge 0$.
First, suppose that $w_1 < q$. In that case, for all $(x_1,x_2) \in [0,1]^2$, we have $x_1w_1 + x_2w_2 \le x_1w_1 \le w_1 < q$, so $v(x_1,x_2) = 0$ for all $(x_1,x_2) \in [0,1]^2$. In particular, $\attvalue_1(\cal G) = \attvalue_2(\cal G) = 0$ and we are done. 

We now assume that $w_1 \ge q$. We show that the claim holds by direct computation of $\attvalue_1,\attvalue_2$. 
We start by computing $\attvalue_1(\cal G)$. By definition, $\attvalue_1(\cal G)$ equals 
\begin{align*}
\int_0^1 \left(\int_0^1 \I(v(x_1,x_2) = 1)\partial x_1 \int_0^1 \I(v(y_1,x_2) = 0)\partial y_1\right) \partial x_2
\end{align*}
which equals
\begin{align}
\int_0^1 \left(\int_0^1 \I(x_1\ge \frac{q - x_2w_2}{w_1})\partial x_1 \int_0^1 \I(y_1 < \frac{q - x_2w_2}{w_1})\partial y_1 \right)\partial x_2 \label{eq:attvalue1}
\end{align}
The internal integrals in~\eqref{eq:attvalue1} are zero whenever $\frac{q - x_2w_2}{w_1}\notin [0,1]$. We know that $\frac{q - x_2w_2}{w_1}\ge 0$ for all $x_2 \in [0,1]$; however, $\frac{q - x_2w_2}{w_1}\le 1$ only when $x_2 \le \frac{w_1 - q}{-w_2}$. This inequality is non trivial only if $\frac{w_1 - q}{-w_2} \le 1$. This happens only when $q \ge w_1 + w_2$.
Therefore, we distinguish between two cases; the first case is when $q \ge w_1 + w_2$, and the second is when $q < w_1 + w_2$. In the second case, since $q > 0$, $w_1 + w_2 > 0$ as well, hence $|w_1| > |w_2|$. In the first case we have:
\begin{align}
\attvalue_1(\cal G) = & \int_0^{\frac{w_1 - q}{-w_2}} \left(1 - \frac{q - x_2w_2}{w_1}\right)\left(\frac{q - x_2w_2}{w_1}\right)\partial x_2 \nonumber\\
 = & \frac{(w_1 - q)^2(2q+w_1)}{6(-w_2)w_1^2}\label{eq:attvalue1case1}
\end{align}
In the second case we have 
\begin{align}
\attvalue_1(\cal G) = & \int_0^{1} \left(1 - \frac{q - x_2w_2}{w_1}\right)\left(\frac{q - x_2w_2}{w_1}\right)\partial x_2 \nonumber\\
 = &  \frac{6q(w_1 + w_2) - 6q^2 - w_2(3w_1+2w_2)}{6w_1^2}\label{eq:attvalue1case2}
\end{align}
Now, let us proceed to compute $\attvalue_2(\cal G)$. We have that $\attvalue_2(\cal G)$ equals
\begin{align*}
\int_0^1\left(\int_0^1 \I(v(x_1,x_2) = 1)\partial x_2\int_0^1 \I(v(x_1,y_2) = 0)\partial y_2\right)\partial x_1\nonumber
\end{align*}
which equals
\begin{align}
\int_0^1\left(\int_0^1 \I(x_2\le \frac{x_1w_1 - q}{-w_2})\partial x_2\int_0^1 \I(y_2 >\frac{x_1w_1 - q}{-w_2})\partial y_2\right)\partial x_1\label{eq:attvalue2}
\end{align}
Again, the internal integrals in~\eqref{eq:attvalue2} are not zero only if $\frac{x_1w_1 - q}{-w_2} \in [0,1]$. $\frac{x_1w_1 - q}{-w_2} \ge 0$ if and only if $x_1\ge \frac{q}{w_1}$, and $\frac{x_1w_1 - q}{-w_2} \le 1$ if and only if $x_1 \le \frac{q - w_2}{w_1}$. This inequality is non-trivial only if $\frac{q - w_2}{w_1} < 1$, which happens only when $q < w_1+w_2$. Thus, we again distinguish between the case when $q \ge w_1 + w_2$ and the case when $q < w_1 + w_2$. In the first case, we have 
\begin{align}
\attvalue_2(\cal G) = & \int_{\frac{q}{w_1}}^1\left(\frac{x_1w_1 - q}{-w_2}\right)\left(1 - \frac{x_1w_1 - q}{-w_2}\right)\partial x_2\nonumber\\
 =& \frac{(w_1 - q)^2(2q - 2w_1 - 3w_2)}{6w_2^2w_1}\label{eq:attvalue2case1}
\end{align}
and in the second case, $\attvalue_2(\cal G)$ equals
\begin{align}
\int_{\frac{q}{w_1}}^{\frac{q - w_2}{w_1}}\left(\frac{x_1w_1 - q}{-w_2}\right)\left(1 - \frac{x_1w_1 - q}{-w_2}\right)\partial x_2 =& \frac{-w_2}{6w_1}\label{eq:attvalue2case2}
\end{align}
Let us compare the values when $q \ge w_1+w_2$.
\begin{align}
\attvalue_1(\cal G) \ge & \attvalue_2(\cal G) & \iff\nonumber\\
\frac{(w_1 - q)^2(2q+w_1)}{6(-w_2)w_1^2} \ge &\frac{(w_1 - q)^2(2q - 2w_1 - 3w_2)}{6w_2^2w_1}& \iff\nonumber\\
\frac{2q+w_1}{w_1} \ge &\frac{2q - 2w_1 - 3w_2}{-w_2}& \iff\nonumber\\
(-w_2)(2q+w_1) \ge &w_1(2q - 2w_1 - 3w_2)& \iff\nonumber\\
w_1(w_1+w_2) \ge& q(w_1+w_2)&\label{eq:ineq-case1}
\end{align}
Thus,~\eqref{eq:ineq-case1} holds with equality if $w_1 = -w_2$, $\attvalue_1(\cal G) > \attvalue_2(\cal G)$ if $w_1 > -w_2$ (since $w_1 > q \ge 0$ by assumption), and $\attvalue_1(\cal G) < \attvalue_2(\cal G)$ otherwise. For the second case, we have 
\begin{align}
\attvalue_1(\cal G) \ge & \attvalue_2(\cal G) & \iff\nonumber\\
\frac{6q(w_1 + w_2) - 6q^2 - w_2(3w_1+2w_2)}{6w_1^2} \ge & \frac{-w_2}{6w_1} & \iff\nonumber\\
\frac{6q(w_1 + w_2) - 6q^2 - w_2(3w_1+2w_2)}{w_1}\ge & -w_2 & \iff\nonumber\\
6q(w_1 + w_2) - 6q^2 - w_2(3w_1+2w_2)\ge & (-w_2)w_1 & \iff\nonumber\\
6q(w_1 + w_2) - 6q^2 - 2w_2(w_1+w_2)\ge & 0 & \iff\nonumber\\
(3q - w_2)(w_1 + w_2) \ge & 3q^2 & \label{eq:ineq-case2}
\end{align}
Now,~\eqref{eq:ineq-case2} holds with equality if $w_1 + w_2 = 0$, since then $q = 0$ as well. Finally, if $w_1 + w_2 > 0$, then it holds with strict inequality since $w_1 + w_2 \ge q$ and $3q - w_2 > 3q$, and we are done.

Next, let us assume that $q < 0$. 
We again directly compute $\attvalue_1(\cal G)$ and $\attvalue_2(\cal G)$. First, if $w_2 \ge q$, then $x_1w_1 + x_2 w_2 \ge x_2 w_2 \ge w_2 \ge q$ for all $(x_1,x_2) \in [0,1]^2$; hence $\attvalue_1(\cal G) = \attvalue_2(\cal G) = 0$, and the claim trivially holds. We now assume that $w_2 < q$. Again, we have that $\attvalue_1(\cal G)$ equals
\begin{align}
\int_0^1 \left(\int_0^1\I(x_1 \le \frac{q - x_2w_2}{w_1})\partial x_1 \int_0^1 \I(y_1 > \frac{q - x_2w_2}{w_1})\partial y_1\right)\partial x_2
\label{eq:attvalue1-neg}
\end{align}
We need to have $\frac{q - x_2w_2}{w_1} \in [0,1]$. $\frac{q - x_2w_2}{w_1} \ge 0$ if and only if $x_2 \ge \frac{q}{w_2}$. Since $w_2 < q$, this value is always less than 1. Moreover, $\frac{q - x_2w_2}{w_1}  \le 1$ if and only if $x_2 \le \frac{q - w_1}{w_2}$. This inequality is not trivial only if $\frac{q - w_1}{w_2} \le 1$, which happens whenever $q \ge w_2 + w_1$. Thus, when $q \ge w_1+w_2$, $\attvalue_1(\cal G)$ equals 
\begin{align*}
\int_{\frac{q}{w_2}}^{\frac{q - w_1}{w_2}} \left(\frac{q - x_2w_2}{w_1}\right)\left(1 - \frac{q - x_2w_2}{w_1}\right)\partial x_2 & = \frac{w_1}{-6w_2}
\end{align*}
and when $q < w_1 + w_2$, $\attvalue_1(\cal G)$ equals 
\begin{align*}
\int_{\frac{q}{w_2}}^1 \left(\frac{q - x_2w_2}{w_1}\right)\left(1 - \frac{q - x_2w_2}{w_1}\right)\partial x_2 & = \nonumber\\
\frac{(q - w_2)^2(2q - 2w_2 - 3w_1)}{6w_2w_1^2}
\end{align*}
For $\attvalue_2(\cal G)$, we employ a similar reasoning. First, $\attvalue_2(\cal G)$ equals
\begin{align}
\int_0^1 \left(\int_0^1\I(x_2 \le \frac{x_1w_1 - q}{-w_2})\partial x_2 \int_0^1 \I(y_2 > \frac{x_1w_1 - q}{-w_2})\partial y_2 \right)\partial x_1\label{eq:attvalue2-neg}
\end{align}
And again, $\frac{x_1w_1 - q}{-w_2} \in [0,1]$ if and only if $x_1 \le \frac{q - w_2}{w_1}$. Note that since $w_2 < q$, $\frac{q - w_2}{w_1} \ge 0$. This constraint is only meaningful when $q < w_1 + w_2$. Thus, when $q \ge w_1+w_2$, we have that $\attvalue_2(\cal G)$ equals
\begin{align*}
\int_0^1 \left(\frac{x_1w_1 - q}{-w_2}\right)\left(1 - \frac{x_1w_1 - q}{-w_2}\right)\partial x_1 & = \nonumber\\
 - \frac{6q^2 - 6q(w_1+w_2) + w_1(3w_2 + 2w_1)}{6w_2^2}
\end{align*}
and equals 
$$  - \frac{(q - w_2)^2(2q+w_2)}{6w_2^2w_1}$$
otherwise. 

Next, we compare the values we obtained. When $q \ge w_1 + w_2$, we have that $w_1 + w_2 < 0$, and in particular, $|w_2| > |w_1|$. Moreover,
\begin{align*}
- \frac{6q^2 - 6q(w_1+w_2) + w_1(3w_2 + 2w_1)}{6w_2^2}\ge  & \frac{w_1}{-6w_2} &\iff\\
\frac{-6q^2 + 6q(w_1+w_2) - w_1(3w_2 + 2w_1)}{-w_2}\ge & w_1 &\iff\\
-6q^2 + 6q(w_1+w_2) - w_1(2w_2 + 2w_1)\ge & 0&\iff\\
(3q- w_1)(w_2 + w_1)\ge & 3q^2
\end{align*}
Under our assumptions, this inequality holds, and we are done with the first case. For the second case, 
\begin{align*}
\frac{(q - w_2)^2(2q - 2w_2 - 3w_1)}{6w_2w_1^2} \ge & - \frac{(q - w_2)^2(2q+w_2)}{6w_2^2w_1} & \iff\\
\frac{2w_2 +3w_1- 2q}{w_1} \ge & - \frac{2q+w_2}{-w_2} & \iff\\
(-w_2)(2w_2 +3w_1- 2q) \ge & w_1(-2q-w_2) & \iff\\
(-w_2)(w_1+w_2) \ge& (-q)(w_1+w_2)&
\end{align*}
Since $w_2< q$, this inequality holds with equality when $w_1 = -w_2$, it is strict whenever $|w_1| > |w_2|$, and the reverse holds when $|w_1| < |w_2|$.
\end{proof}
We are now ready to complete the proof of Theorem~\ref{thm:monotone-weight}.
\begin{proof}[Proof of Theorem~\ref{thm:monotone-weight}]
We have shown the case where $w_i \ge w_j \ge 0$ in Theorem~\ref{thm:piv}. We have also shown this to be true for two features in Lemma~\ref{lem:2player-monotoneweight}. We just need to show that Lemma~\ref{lem:2player-monotoneweight} extends to the case of arbitrary players. Suppose that $|w_i| > |w_j|$. Let us write $\attvalue_i(\tup{N,\vec w;q})$ to be the influence of $i$ under the linear classifier defined by $\vec w$ and $q$. We observe that 
\begin{align*}
\attvalue_i(\tup{N,\vec w;q}) = &\int_{\vec x_{-i,-j}}\attvalue_i(\tup{\{i,j\},(w_i,w_j);q - \sum_{k \ne i,j} x_kw_k})\\
\ge & \int_{\vec x_{-i,-j}}\attvalue_j(\tup{\{i,j\},(w_i,w_j),q - \sum_{k \ne i,j} x_kw_k}) \\ = &\attvalue_j(\tup{N,\vec w;q})
\end{align*}
which concludes the proof.
\end{proof}

\section{Proof that $\attvalue$ satisfies (D), (Sym) and (DU)}
We show that $\attvalue$ satisfies the three axioms. If $v(\vec a_{-i},b) = v(\vec a)$ for all $\vec a \in A$ and all $b \in A_i$, then $|v(\vec a_{-i},b) - v(\vec a)| = 0$, and in particular, $\attvalue_i(\cal G) = 0$; hence, $\attvalue$ satisfies the dummy property. Suppose we are given a bijection $\sigma_i: A_i \to A_i$. We observe that 
\begin{align*}
\attvalue_i(\cal G) =&\frac{1}{|A|}\sum_{\vec a \in A}\sum_{b \in A_i} |v(\vec a_{-i},b) - v(\vec a)|\\
			= & \frac{1}{|A|}\sum_{\vec a_{-i} \in A_{-i}}\sum_{b'\in A_i}\sum_{b \in A_i} |v(\vec a_{-i},\sigma_i(b)) - v(\vec a_{-i},\sigma_i(b'))|\\
			=& \frac{1}{|A|}\sum_{\vec a_{-i} \in A_{-i}}\sum_{b'\in A_i}\sum_{b \in A_i} |v_{\sigma_{i}}(\vec a_{-i},b) - v_{\sigma_i}(\vec a_{-i},b')|\\
			=& \frac{1}{|A|}\sum_{\vec a \in A}\sum_{b'\in A_i}\sum_{b \in A_i} |v_{\sigma_{i}}(\vec a_{-i},b) - v_{\sigma_i}(\vec a)| = \attvalue_i(\sigma_i\cal G)
\end{align*}
so $\attvalue$ is invariant under permutations of feature states. Similarly, for any bijection $\sigma:N \to N$, $\attvalue_i(\cal G) = \attvalue_{\sigma(i)}(\sigma\cal G)$; therefore, $\attvalue$ satisfies symmetry. 

Given a set $B \subseteq A$ and a feature $i$, let us write $W_{\bar{\vec a}_{-i}}(B)= \{\vec a \in B\mid v(\vec a) = 1,\vec a_{-i} = \bar{\vec a}_{-i}\}$, and $L_{\bar{\vec a}_{-i}}(B) = \{\vec a \in B\mid v(\vec a) = 0,\vec a_{-i} = \bar{\vec a}_{-i}\}$. We observe that $W_{\vec a_{-i}}(B)\cap W_{\bar{\vec a}_{-i}}(B) = L_{\vec a_{-i}}(B)\cap L_{\bar{\vec a}_{-i}}(B) = \emptyset$; moreover, $L(B) = \bigcup_{\vec a_{-i} \in A_{-i}} L_{\vec a_{-i}}(B)$ and $W(B) = \bigcup_{\vec a_{-i} \in A_{-i}} W_{\vec a_{-i}}(B)$.
Now, given some $B \subseteq A$, let us take some $W'\subseteq W(A) \setminus W(B)$. 
\begin{align*}
\attvalue_i(W(B),L(B))=& \sum_{\vec a \in B}\sum_{\substack{b \in A_i: \\(\vec a_{-i},b) \in B}}|v(\vec a_{-i},b) - v(\vec a)|\\
											=& \sum_{\vec a \in W(B)}\sum_{\substack{b \in A_i:\\(\vec a_{-i},b) \in B}}|v(\vec a_{-i},b) - v(\vec a)|\\
											 &+\sum_{\vec a \in L(B)}\sum_{\substack{b \in A_i:\\(\vec a_{-i},b) \in B}}|v(\vec a_{-i},b) - v(\vec a)|
\end{align*}
Next, we observe that the first summand equals
\begin{align*}
\sum_{\vec a \in W(B)}\sum_{\substack{b \in A_i:\\(\vec a_{-i},b) \in B}}v(\vec a) - v(\vec a_{-i},b), 
\end{align*}
which equals
\begin{align}\label{eq:winloss1}
\sum_{\vec a_{-i} \in A}\sum_{\vec a \in W_{\vec a_{-i}}(B)}\sum_{\substack{b \in A_i:\\(\vec a_{-i},b) \in B}}v(\vec a) - v(\vec a_{-i},b) 
\end{align}
Now, $v(\vec a) - v(\vec a_{-i},b) = 1$ if and only if $v(\vec a_{-i},b) = 0$; that is, if $(\vec a_{-i},b) \in L_{\vec a_{-i}}(B)$. Thus, Equation~\eqref{eq:winloss1} equals  
\begin{align}\label{eq:winloss2}
\sum_{\vec a_{-i} \in A}\sum_{\vec a \in W_{\vec a_{-i}}(B)}|L_{\vec a_{-i}}(B)| & =\\
\sum_{\vec a_{-i} \in A}|W_{\vec a_{-i}}(B)||L_{\vec a_{-i}}(B)| \nonumber
\end{align}
A similar construction with $W'$ shows that 
$$\attvalue_i(W',L(B)) = \sum_{\vec a_{-i} \in A_{-i}}|W_{\vec a_{-i}}'|\cdot |L_{\vec a_{-i}}(B)|;$$ 
since $W(B)$ and $W'$ are disjoint, $\attvalue$ satisfies the disjoint union property. 

\section{Relation to Classic Values in TU Cooperative Games}
Our work generalizes influence measurement in classic TU cooperative games. We recall that a cooperative game with transferrable utility is given by a set of players $N =\{1,\dots,n\}$, and a function $v:2^N \to \R$, called the {\em characteristic function}. A game is defined by the tuple $\cal G = \tup{N,v}$. 
We say that a game $\cal G$ is {\em monotone} if for all $S \subseteq T \subseteq N$, $v(S) \le v(T)$. 

Classic literature identifies two canonical methods of measuring feature influence in cooperative games, the Shapley value~\citep{shapleyvalue}, and the Banzhaf value~\citep{banzhaf}. 
We begin by providing the following definitions. 
Given a set $S \subseteq N$ and a player $i$, we let $m_i(S) = v(S\cup\{i\}) - v(S)$ denote the {\em marginal contribution} of $i$ to $S$. The value $m_i(S)$ simply describes the added benefit of having $i$ join the coalition $S$.
Let $\Pi(N)$ be the set of all bijections from $N$ to itself (also called the set of permutations of $N$); given some $\sigma \in \Pi(N)$ We let $P_i(\sigma) = \{j \in N\mid \sigma(j) < \sigma(i)\}$ be the set of the predecessors of $i$ under $\sigma$. 
We define $m_i(\sigma) = v(P_i(\sigma) \cup\{i\}) - v(P_i(\sigma))$. 
\begin{definition}
The {\em Banzhaf value} of a player $i \in N$ is given by 
$$\banzhaf_i(\cal G) = \frac{1}{2^{n}}\sum_{S \subseteq N}m_i(S).$$
\end{definition}
The Banzhaf value takes on a simple probabilistic interpretation: if we choose a set $S$ uniformly at random from $N$, the Banzhaf value of a player is his expected marginal contribution to that set. 

Rather than uniformly sampling sets, the Shapley value is based on uniformly sampling permutations. 
\begin{definition}
The {\em Shapley value} of a player $i \in N$ is given by
$$\sv_i(\cal G) = \frac{1}{n!}\sum_{\sigma \in \Pi(N)} v(P_i(\sigma)\cup\{i\}) - v(P_i(\sigma)).$$
\end{definition}

Intuitively, one can think of the Shapley value as the result of the following process. We randomly pick some order of the players; each player receives a payoff that is equal to his marginal contribution to his predecessors in the ordering. The Shapley value is simply the expected payoff a player receives in this scheme. 

When we sample sets uniformly at random from $N \setminus \{i\}$, we are heavily biased towards selecting sets whose size is approximately $n/2$. When measuring influence according to the Shapley value, we are no longer biased towards any set size. One can think of the Shapley value is measuring a player's expected marginal contribution to a set $S$, where $S$ is chosen according to the following process. First, we pick some $k \in\{0,\dots,n-1\}$ uniformly at random, and then we pick a set of size $k$ uniformly at random. 

We observe that our classification setting is a generalization of TU cooperative games. Think of each player as a feature that can take on two values: 0 (corresponding to ``absent''), and 1 (corresponding to ``present'').
An immediate observation is that $\stateval$ coincides with the Banzhaf value for TU cooperative games. Is there some natural extension of the Shapley value for general classification tasks?

Our work provides a negative answer to this question. We observe that Theorem~\ref{thm:charstateval} states that the only value that satisfies the dummy, symmetry and linearity axioms is $\stateval$. When reduced to the cooperative game setting, we obtain axioms that were used to axiomatically characterize both the Shapley and the Banzhaf values~\citep{lehrer1988axiomatization,shapleyvalue,shapleyyoung}. 

The dummy axiom (Definition~\ref{def:dummy}) reduces to the following: a player $i \in N$ is a dummy if for all $S \subseteq N$, $v(S\cup\{i\}) = v(S)$. Thus, the dummy axiom requires that if a player is a dummy, then his value should be zero. 

The symmetry axiom (Definition~\ref{def:symmetry}) reduces to the following: 
given a game $\cal G =\tup{N,v}$, and some $i,j\in N$, let us define $\cal G' = \tup{N,v'}$ as follows: for all $S \subseteq N\setminus\{i,j\}$, $v'(S) = v(S)$, and $v'(S\cup\{i,j\}) = v(S\cup\{i,j\})$; however, $v'(S\cup\{i\}) = v(S\cup\{j\})$ and $v'(S\cup\{j\}) = v(S\cup\{i\})$. A value $\phi$ satisfies symmetry if 
$\phi_i(\cal G) = \phi_j(\cal G')$. 
Symmetry reduces to saying that if we replace $v(S)$ with $v(S\setminus i)$ for all $S$ such that $i \in S$, and replace $v(S)$ with $S \cup\{i\})$ for all $S$ such that $i \notin S$, then the total influence of a player (i.e. his influence when being absent plus his influence when present) does not change. 

Additivity as defined in Definition~\ref{def:additivity} is also naturally applied to TU cooperative games and is equivalent to the definition given in other axiomatic treatments of values in cooperative games. 

It is well-known that both the Banzhaf and Shapley values satisfy the dummy, symmetry and additivity axioms, and indeed, Proposition~\ref{prop:impossibility1} applies to them both: the Banzhaf value (and Shapley) of a player only measures the effect of player $i$ joining a coalition, but not the effect of him leaving it. These two values, however, sum to 0. Indeed:
\begin{align*}
\banzhaf_{i,1}(\cal G) + \banzhaf_{i,0}(\cal G) = & \frac{1}{2^n}\sum_{S \subseteq N}v(S\cup\{i\} )- v(S) \\
&+ \frac{1}{2^n}\sum_{S \subseteq N}v(S\setminus \{i\}) - v(S) \\
= & \frac{1}{2^n}\sum_{S \subseteq N\setminus \{i\}} v(S \cup \{i\})- v(S) \\
&+ \frac{1}{2^n}\sum_{S \subseteq N \setminus\{i\}}v(S) - v(S\cup\{i\})
\\ 
=& 0
\end{align*}
Theorem~\ref{thm:attvalue} characterizes $\attvalue$ as the unique value to satisfy the dummy, symmetry and disjoint union properties. 

Going back to the classification setting, it is easy to see that Definition~\ref{def:disjoint-union} implies that for $C \subseteq A$ and any two sets $B,B' \subseteq A\setminus C$, $\phi_i(B,C) + \phi_i(B',C) = \phi_i(B\cup B',C) + phi_i(B\cap B',C)$. 

One can directly interpret the DU property in TU cooperative games. Given a game $\cal G = \tup{N,v}$ and a subset $\cal B$ of $2^N$, both the Shapley and Banzhaf values can be defined to ignore any elements that are not contained in $\cal B$. It is easy to see that Theorem~\ref{thm:attvalue} implies the uniqueness of $\attvalue$ for TU cooperative games, and that it equals the Banzhaf value. Thus, Theorem~\ref{thm:attvalue} can be seen as an alternative axiomatization of the Banzhaf value, this time from the binary classification perspective. 

\section{Axiomatic Approach to State Influence}\label{sec:stateinfluence}
Section~\ref{sec:featureinfluence} provided an answer to questions of the following form: what is the impact of gender on classification. The answer provided in previous sections was that influence was a function of the feature's ability to change outcomes by changing its state. 

It is also useful to ask a related question: suppose that a certain search engine user is profiled as a female. What is the influence of this profiling decision? In other words, rather than measuring feature influence, we are measuring the influence of feature $i$ being in a certain {\em state}.

For a feature $i \in N$ and a state $b \in N$, we can ask what is the influence of the state $b$, rather than the influence of $i$. That is, rather than having a value $\phi_i(\cal G)$ for a feature $i \in N$, we now study the influence of the state $b \in A_i$, i.e. a real value $\phi_{i,b}(\cal G)$ for each $i \in N$ and $b \in A_i$.

While Proposition~\ref{prop:impossibility1} implies that any {\em feature} influence measure that satisfies the dummy, symmetry and additivity axioms must be trivial, this result does not carry through to measures of state influence. 
\begin{description}
\item[Dummy (D):] given $i \in N$ and $b \in A_i$, we say that $\alpha$ satisifies the dummy property if whenever $v(\vec a_{-i},b) = v(\vec a)$ for all $\vec a \in A$, $\alpha_{i,b} = 0$.
\item[Symmetry (Sym):] Two states $b,b' \in A_i$ are symmetric if for all $\vec a \in A$, $v(\vec a_{-i},b) = v(\vec a_{-i},b')$. A value $\alpha$ satisfies symmetry if $\alpha_{i,b} = \alpha_{i,b'}$ whenever $b$ and $b'$ are symmetric.
\item[Linearity (L):] Given games $\cal G_1 = \tup{N,A,v_1}$ and $\cal G_2 = \tup{N,A,v_2}$, let us write $\cal G = \tup{N,A,v}$ where $v = v_1+v_2$. We assume that $v_1$ and $v_2$ are such that $v$ is still a function with binary values (i.e. if $v_1(\vec a) = 1$ then $v_2(\vec a) = 0$). A value $\alpha$ is linear if $\alpha_{i,b}(\cal G) = \alpha_{i,b}(\cal G_1) + \alpha_{i,b}(\cal G_2)$.
%\item[feature Additivity (PA):] Given two features $i,j\in N$, we can replace $T = \{i,j\}$ with a single feature $T$ with state space $A_T = A_i\times A_j$ in the following manner. Given a game $\cal G = \tup{N,A,v}$, we define $\cal G_T = \tup{N\setminus\{i,j\}\cup\{T\},A_{-i,j}\times A_T,v_T}$; here, $v_T(\vec a_{-i,j},(b,c)) = v(\vec a_{-i,j},b,c)$. A value satisfies feature additivity if $\alpha_{i,b} = \sum_{c \in A_j}\alpha_{T,(b,c)}$.
\end{description}
Let us define 
\begin{align}\label{eq:stateval}
\stateval_{i,b}(\cal G) = \frac{1}{|A|}\sum_{\vec a \in A}v(\vec a_{-i},b) - v(\vec a)
\end{align}
We let $\raw\stateval$ denote the value $\stateval$ without the normalizing factor $\frac{1}{|A|}$. We refer to $\raw\stateval$ as the {\em raw} version of $\stateval$. In Theorem~\ref{thm:charstateval}, we show that $\raw\stateval$ is the unique (up to a constant) value that satisfies the symmetry, dummy and linearity axioms.

\begin{theorem}\label{thm:charstateval}
If a value $\phi$ satisfies the (D), (Sym), and (L), then $\phi = c\stateval$, where $c$ is an arbitrary constant. 
\end{theorem}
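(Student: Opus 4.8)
The plan is to follow the same three‑move template as the proof of Theorem~\ref{thm:attvalue}: span the class of binary classifiers by a simple family of datasets using linearity, evaluate $\phi$ on that family using symmetry and dummy, and reassemble. The ``if'' direction is routine and I would dispatch it first: reading off~\eqref{eq:stateval}, if $v(\vec a_{-i},b)=v(\vec a)$ for all $\vec a$ then every summand of $\stateval_{i,b}$ vanishes (so (D) holds); linearity of the sum gives (L); and two symmetric states produce termwise‑equal sums, giving (Sym). So the content is in the ``only if'' direction.

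First I would reduce an arbitrary binary classifier to the singleton datasets $\cal U_{\vec a}$. Writing $v=\sum_{\vec a\in W(A)}u_{\vec a}$ (where $W(A)=\{\vec a\in A: v(\vec a)=1\}$), the summands have pairwise disjoint supports, so every partial sum is again $\{0,1\}$‑valued and the hypothesis of axiom (L) is met at each stage; iterating (L), together with the fact that $\phi_{i,b}$ of the all‑zero classifier is $0$ (immediate from (D), or from (L) applied to $v=0+0$), yields
\[
\phi_{i,b}(\cal G)=\sum_{\vec a\in W(A)}\phi_{i,b}(\cal U_{\vec a}).
\]
It then suffices to determine $\phi_{i,b}(\cal U_{\vec a})$ for all $\vec a,i,b$.

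Next I would evaluate $\phi$ on a single singleton. In $\cal U_{\vec a}$, any two states $b,b'\in A_i$ with $b,b'\ne a_i$ satisfy $u_{\vec a}(\vec a'_{-i},b)=u_{\vec a}(\vec a'_{-i},b')=0$ for every $\vec a'$, hence are symmetric; by (Sym), $\phi_{i,b}(\cal U_{\vec a})$ takes at most the two values $\winval$ (when $b=a_i$) and $\loseval$ (when $b\ne a_i$). Using state symmetry on the coordinates other than $i$ — applied one coordinate at a time, comparing $\cal U_{\vec a}$ with a relabeled copy $\cal U_{\vec a'}$ that differs in a single coordinate — together with feature symmetry for the index $i$ (exactly as in the closing paragraph of the proof of Theorem~\ref{thm:attvalue}), one argues that $\winval$ and $\loseval$ depend on nothing but $|A_i|$. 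To relate the two, I would use the classifier with $v(\vec a')=1 \iff \vec a'_{-i}=\bar{\vec a}_{-i}$: feature $i$ is irrelevant there, so every state of $i$ is a dummy state and $\phi_{i,b}=0$; decomposing $v=\sum_{c\in A_i}u_{(\bar{\vec a}_{-i},c)}$ and invoking the displayed identity gives $\winval+(|A_i|-1)\loseval=0$. This leaves a single free scalar $c$, with $\loseval=-c$ and $\winval=(|A_i|-1)c$, that is $\phi_{i,b}(\cal U_{\vec a})=c\cdot\raw{\stateval}_{i,b}(\cal U_{\vec a})$, since $\raw{\stateval}_{i,b}(\cal U_{\vec a})=|A_i|\cdot\I[a_i=b]-1$.

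Finally, summing this identity over $\vec a\in W(A)$ and re‑using the displayed decomposition gives $\phi_{i,b}(\cal G)=c\cdot\raw{\stateval}_{i,b}(\cal G)$ for every $\cal G$, and rescaling the constant to absorb the normalizing factor $\frac1{|A|}$ gives $\phi=c\,\stateval$. The step I expect to be the main obstacle is the claim in the previous paragraph that $\winval$ and $\loseval$ are genuinely independent of the ``context'' $\vec a_{-i}$ and of which feature $i$ is: this invariance must be extracted by making the state‑ and feature‑relabelings line up exactly with the singleton decomposition, and for features whose state sets differ in size it requires an additional padding‑style argument to transport the constant across features. Everything else is bookkeeping that runs parallel to the proof of Theorem~\ref{thm:attvalue}.
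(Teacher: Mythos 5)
Your proposal is correct and follows essentially the same route as the paper's own proof: decompose $v$ into singleton (unanimity) datasets via (L), use symmetry to show $\phi_{i,b}(\cal U_{\vec a})$ takes only two values $x$ (for $b=a_i$) and $y$ (for $b\ne a_i$), and use the dataset with $v(\vec a')=1$ iff $\vec a'_{-i}=\bar{\vec a}_{-i}$, in which every state of $i$ is a dummy, to force $x+(|A_i|-1)y=0$. Your explicit derivation of that relation from the dummy axiom is exactly what the paper's citation of Proposition~\ref{prop:impossibility1} amounts to, and your care about disjoint supports in (L) and about the independence of the constant from $\vec a_{-i}$ and $i$ only makes the same argument more precise.
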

\begin{proof}
Let us observe that every game $v:A \to \{0,1\}$ can be written as the disjoint sum of unanimity games; namely $v = \sum_{\vec a \in A: v(\vec a) = 1} u_{\vec a}$. Thus, it suffices to show that the claim holds for unanimity games.

Let $\cal U_{\vec a}= \tup{N,A,u_{\vec a}}$; we show that $\phi_{i,b}(\cal U_{\vec a})$ equals $\stateval_{i,b}(\cal U_{\vec a})$. 
First, if $b = a_i$ then $\raw{\stateval}_{i,b}(\cal U_{\vec a}) = |A_i| - 1$; if $b \ne a_i$, then $\raw{\stateval}_{i,b}(\cal U_{\vec a}) = -1$. Now, by symmetry, we have that $\phi_{i,b}(\cal U_{\vec a})  =\phi_{i,b'}(\cal U_{\vec a})$ for all $b,b' \ne a_i$. If we write $\phi_{i,b}(\cal U_{\vec a}) = y$ for all $b \ne a_i$, and $\phi_{i,a_i}(\cal U_{\vec a}) = x$, then according to Proposition~\ref{prop:impossibility1}, $\sum_{b \ne a_i} y + x = 0$, which implies that $x =  - y(|A_i|-1)$. Finally, according to feature symmetry, the value of $y$ cannot depend on $i$, and is equal for all $j \in N$. We conclude that for all $i \in N$ and all $b \in A_i$, $\phi_{i,b}(\cal G) = \stateval_{i,b}(\cal G)$.
\end{proof}

As a direct corollary of Theorem~\ref{thm:piv}, we have that the unique (up to a constant) state value to satisfy (Sym), (D) and (DU) axioms (see Definitions~\ref{def:symmetry},~\ref{def:dummy} and~\ref{def:disjoint-union} in Section~\ref{sec:featureinfluence}) is 
$$\attvalue_{i,b}(\cal G) = \sum_{\vec a\in A}|v(\vec a_{-i},b) - v(\vec a)|.$$

\section{Influence in Weighted Settings}\label{sec:weighted}
Unlike previous sections, let us assume that there is some weight function $w:A \to \R$ that assigns a non-negative weight to every state vector. $w$ can be thought of as a prior distribution that governs the likelihood of observing a state vector $\vec a \in A$. Given $B \subseteq A$, let $w(B)$ denote $\sum_{\vec a \in B} w(\vec a)$. We also write for a given $b \in A_i$, $w(b\mid \vec a_{-i}) = \sum_{\vec a_{-i}\in A_{-i}} w(\vec a_{-i},b)$; for a given $\vec a_{-i} \in A_{-i}$, we write $w(\vec a_{-i}) = \sum_{b \in A_i} w(\vec a_{-i},b)$. 
Given this definition, let us rethink the disjoint union property. Given a set of winning state vectors $W \subseteq A$ and a set of losing state vectors $L \subseteq A$, we can think of a weighted influence measure as a function $\phi_i$ of $W,L$ and $w:A \to \R_+$.

Fix some $C \subseteq A$. Given two functions $w,w':A \to \R_+$ that agree on $C$ (i.e. $w(\vec a) = w'(\vec a)$ for all $\vec a \in C$), and some $B \subseteq A\setminus C$, let us write 
$$w\oplus_B w'(\vec a) =\begin{cases}w(\vec a) & \mbox{if }\vec a \in C\\ w(\vec a) + w'(\vec a) & \mbox{if } \vec a \in B.\end{cases} $$
\begin{definition}\label{def:disjoint-union-weighted}
We say that an influence measure satisfies {\em weighted disjoint union} (WDU) if for any disjoint $B,C \subseteq A$ and any two weight functions $w,w':A\to \R_+$ that agree on $C$, we have that $\phi_i(B,C,w) + \phi_i(B,C,w') = \phi_i(B,C,w\oplus_B w')$.
\end{definition}
\begin{lemma}
Weighted disjoint union implies the disjoint union property.
\end{lemma}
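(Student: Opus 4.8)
The plan is to obtain the (unweighted) disjoint union property as a degenerate instance of (WDU), using the natural bridge between weighted and unweighted datasets. First I would record two elementary facts about the weighted model. (i) An unweighted dataset $\tup{W,L}$ is the weighted dataset $\tup{W,L,\mathbf{1}}$, where $\mathbf{1}$ assigns weight $1$ to every point; so $\phi_i(W,L)=\phi_i(W,L,\mathbf{1})$. (ii) Since $w(\vec a)$ is read as the number of occurrences of $\vec a$ (equivalently, its probability of being observed), a point carrying weight $0$ is the same as a point that is absent: $\phi_i(W\cup\{\vec a\},L,w)=\phi_i(W,L,w)$ and $\phi_i(W,L\cup\{\vec a\},w)=\phi_i(W,L,w)$ whenever $w(\vec a)=0$. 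Fact (ii) is a minimal consistency requirement on what ``weighted influence measure'' should mean, implicit already in the interpretation of $w$ as a count or a distribution; I would simply posit it.

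Given (i) and (ii), the core step is a one-line computation with indicator weights. To prove $\phi_i(R,Q)+\phi_i(R',Q)=\phi_i(R\cup R',Q)$ for disjoint $R,R'\subseteq A\setminus Q$, set $C:=Q$, $B:=R\cup R'$, and take $w:=\mathbf{1}_{R\cup Q}$ and $w':=\mathbf{1}_{R'\cup Q}$ as maps $A\to\R_{+}$. Because $R,R',Q$ are pairwise disjoint, $w$ and $w'$ both equal $1$ on $C=Q$, so they agree on $C$; and on $B=R\cup R'$ their pointwise sum is identically $1$ (it is $1+0$ on $R$ and $0+1$ on $R'$), so $w\oplus_{B}w'=\mathbf{1}_{R\cup R'\cup Q}$. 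Applying (WDU) to $B,C,w,w'$ gives
\[
\phi_i(R\cup R',Q,\mathbf{1}_{R\cup Q})+\phi_i(R\cup R',Q,\mathbf{1}_{R'\cup Q})=\phi_i(R\cup R',Q,\mathbf{1}_{R\cup R'\cup Q}),
\]
and facts (i),(ii) collapse this to the claim: in the first summand the points of $R'$ carry weight $0$, hence are absent, leaving the unweighted dataset $\tup{R,Q}$; symmetrically the second summand is $\phi_i(R',Q)$; and the right side is $\phi_i(R\cup R',Q)$. The companion identity $\phi_i(Q,R)+\phi_i(Q,R')=\phi_i(Q,R\cup R')$ is obtained by the same manipulation, now applying (WDU) with the roles of the two disjoint sets (and of ``winning'' vs.\ ``losing'') exchanged, so that the enlarged set sits in the slot whose mass (WDU) is allowed to add.

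The step I expect to need the most care is precisely this matching of viewpoints: (WDU) holds the two sets $B,C$ fixed and varies only the \emph{weight} on $B$, whereas (DU) holds one set fixed and varies the \emph{other set}. The indicator trick is what reconciles them — enlarging a set from $R$ to $R\cup R'$ is simulated by keeping the ambient set $R\cup R'$ fixed and ``switching on'' the weight of $R'$ — but this reconciliation only goes through once one has committed to ``weight $0$ = absent'' and ``constant weight $1$ = the unweighted dataset.'' Everything after that point is the arithmetic of indicator functions, which is routine.
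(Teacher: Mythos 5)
The paper states this lemma without any proof, so there is nothing to compare against line by line; judged on its own, your argument is the natural one and it works, but its soundness rests entirely on the two bridging conventions you posit, and it is worth being clear about their status. Fact (i) (unweighted $=$ constant weight) is indeed what the paper intends --- Section~\ref{sec:weighted-influence} says the unweighted setting ``implicitly assume[s] that all points occur at the same frequency'' --- and your symmetric use of (WDU) in both coordinates, though not literal in Definition~\ref{def:disjoint-union-weighted} (which only adds mass in the first slot), matches how the paper itself invokes (WDU) in the proof of Lemma~\ref{lem:attvalue-cxy}, where both $\phi_i(x_1+x_2,y)=\phi_i(x_1,y)+\phi_i(x_2,y)$ and $\phi_i(y,x_1+x_2)=\phi_i(y,x_1)+\phi_i(y,x_2)$ are derived from it. Fact (ii) (``weight $0$ $=$ absent''), however, is genuinely \emph{not} derivable from (WDU): the axiom never changes the underlying sets $B,C$, so nothing in it relates measures with different supports (the most one gets from (WDU) alone is that $\phi_i(B,C,w)=0$ when $w$ vanishes on all of $B$, by taking $w'\equiv 0$ on $B$). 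Without some such convention the lemma is not even well-posed, since (DU) compares datasets with different supports; you correctly identified this and made the needed assumption explicit rather than burying it, which is the right move given that the paper leaves the weighted/unweighted bridge implicit. One small tightening: after deleting the weight-$0$ points of $R'$ you are left with $\phi_i(R,Q,\mathbf{1}_{R\cup Q})$, whereas your fact (i) identifies the unweighted $\tup{R,Q}$ with weight $1$ on \emph{every} point of $A$; either state (i) with the indicator of the dataset, or add the (equally natural) convention that weights of unobserved points are irrelevant, so the two weight functions define the same measure. With those conventions spelled out, the indicator-weight computation is correct and yields both halves of Definition~\ref{def:disjoint-union}.
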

We again write $W_{\vec a_{-i}} = \{(\vec a_{-i},b) \in A \mid v(\vec a_{-i},b) =1\}$, and $L_{\vec a_{-i}} = \{(\vec a_{-i},b)\in A \mid v(\vec a_{-i},b) = 0\}$.

Given a weight function $w:A \to \R_+$ and a game $\cal G =\tup{N,A,v}$, let
$$\pattvalue(\cal G,w) = \sum_{\vec a \in A}w(\vec a)\sum_{b \in A_i}w(b|\vec a_{-i})|v(\vec a_{-i},b) - v(\vec a)|.$$
Let us extend the symmetry axiom (Definition~\ref{def:symmetry}) to a weighted variant. 
Given a weight function $w:A \to \R_+$ and a bijection $\sigma$ over $A_i$ or $N$, we let $\sigma w(\vec a) =w(\sigma \vec a)$.
\begin{definition}\label{def:p-symmetry}
Given a game $\cal G = \tup{N,A,v}$ and a weight function $w:A \to \R$, we say that an influence measure $\phi$ is {\em state-symmetric} with respect to $w$ (Sym-$w$) if for any permutation $\sigma:A_i \to A_i$, and all $j \in N$, $\phi_j(\sigma \cal G,\sigma w) = \phi_j(\cal G,w)$. That is, relabeling the states and letting them keep their original distributions does not change the value of any feature. Similarly, we say that an influence measure $\phi$ is {\em feature-symmetric} if for any permutation $\sigma: N \to N$, $\phi_{\sigma(i)}(\sigma\cal G,\sigma w) = \phi_i(\cal G,w)$. That is, relabeling the coordinate of a feature does not change its value.
\end{definition}

\begin{theorem}\label{thm:attvalue-prior}
If a probabilistic influence measure $\phi$ satisfies (D), (Sym) and (DU) with respect to some $\cal D$, then 
$$\phi_i(\cal G,\cal D) = C\pattvalue(\cal G,\cal D).$$
\end{theorem}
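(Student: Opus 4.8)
The plan is to replay the proof of Theorem~\ref{thm:attvalue} step for step, now carrying the prior $\cal D$ through every stage; the three hypotheses are used as the axioms (D), (Sym) and (DU) \emph{with respect to} $\cal D$, i.e. in the forms formalized in Definitions~\ref{def:p-symmetry} and~\ref{def:disjoint-union-weighted}, so that a relabeling of states or features acts jointly on $\cal G$ and on $\cal D$, and (DU) combines the data on disjoint winning (resp. losing) sets while respecting $\cal D$. First I would fix a feature $i$ and invoke (DU) with respect to $\cal D$ to decompose $\phi_i(\cal G,\cal D)$ over the slices indexed by $\vec a_{-i}\in A_{-i}$, exactly as in~\eqref{eq:state-additivity}. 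Just as in the passage from~\eqref{eq:state-additivity} to~\eqref{eq:state-additivity2}, feature $i$ is a dummy on every cross-slice pair $(\vec a_{-i},\bar{\vec a}_{-i})$ with $\vec a_{-i}\neq\bar{\vec a}_{-i}$, so (D) annihilates those terms and leaves $\phi_i(\cal G,\cal D)=\sum_{\vec a_{-i}}\phi_i(W_{\vec a_{-i}},L_{\vec a_{-i}},\cal D)$.

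Next I would analyze a single slice as a one-feature game over $A_i$. Using (Sym) with respect to $\cal D$ — the distribution-respecting lift of Lemma~\ref{lem:symmetry-dependence-on-winning-states} — two winning (resp. losing) states carrying equal $\cal D$-weight are interchangeable, so the slice value can depend only on the weights attached to its states, not on their labels. Then (DU) with respect to $\cal D$ plays the role that double additivity played in deriving $\phi_i(w,l)=\phi_i(1,1)wl$: its set-additivity in the winning and in the losing argument forces the slice value to be a sum over pairs (winning state $a$, losing state $b$) of a pairwise contribution, and its weight-additivity forces that contribution to be linear in $\cal D(\vec a_{-i},a)$ and in $\cal D(b\mid\vec a_{-i})$, with the base cases (an empty winning or losing slice makes $i$ a dummy, hence value $0$) supplied by (D). Collecting the two orientations, winning-to-losing and losing-to-winning, reproduces up to a feature-dependent constant $C_i$ the $\vec a_{-i}$-slice $\sum_{a\in W_i(\vec a_{-i})}\cal D(\vec a_{-i},a)\sum_{b\in L_i(\vec a_{-i})}\cal D(b\mid\vec a_{-i})$ plus its symmetric counterpart, which is precisely the $\vec a_{-i}$-contribution of $\pattvalue$ weighted by $\cal D(\vec a)$.

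Summing the slices yields $\phi_i(\cal G,\cal D)=C_i\,\pattvalue(\cal G,\cal D)$, and a final appeal to feature (Sym) with respect to $\cal D$ — swapping $i$ and $j$ by a bijection $\sigma:N\to N$ acting on $\cal G$ and $\cal D$ together, as at the close of the proof of Theorem~\ref{thm:attvalue} — equates the constants, $C_i=C_j=C$. The main obstacle is the single-slice bilinear step: in the unweighted proof the product $w\cdot l$ dropped out of a clean induction on integer counts, whereas here I must check that set- and weight-additivity together with distribution-respecting symmetry genuinely pin the contribution of a (winning, losing) pair down to the product $\cal D(\vec a_{-i},a)\,\cal D(b\mid\vec a_{-i})$ rather than to some other symmetric function of the two weights. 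In particular, verifying that it is the \emph{conditional} weight $\cal D(b\mid\vec a_{-i})$ (equivalently, that the $1/\cal D(\vec a_{-i})$ slice-marginal normalization) which emerges on the perturbed side is the delicate point, and is where I would concentrate the care.
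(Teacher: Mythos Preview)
Your proposal is correct and follows essentially the same route as the paper: decompose via (DU) into slices, kill cross-slices by (D), reduce each slice to a one-feature problem, derive bilinearity in the winning and losing weights, and equate the constants $C_i$ via feature symmetry. The paper handles the step you flag as the ``main obstacle'' by an explicit Cauchy-type functional-equation lemma (a function on $\R^2$ that vanishes on the axes and is additive in each argument separately must be $f(x,y)=cxy$), which is precisely the mechanism behind your ``weight-additivity forces linearity'' claim; your worry about the conditional normalization $1/\cal D(\vec a_{-i})$ is largely a matter of how $\pattvalue$ is written rather than something that has to emerge from the axioms, since the slice computation produces $w(W_{\vec a_{-i}})\cdot w(L_{\vec a_{-i}})$ directly and the conditional form is just a rewriting.
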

Before we proceed, we wish to emphasize two important aspects of Theorem~\ref{thm:attvalue-prior}. First, if we set $p(\vec a) = \frac{1}{|A|}$ then we obtain Theorem~\ref{thm:attvalue}. In other words, $\attvalue$ is an influence measure that assumes that all elements in the dataset are equally likely. 

Another point of note is the underlying process that the influence measures entail. If we assume that the weight function describes a distribution over $A$, one can think of the influence measure as the following process. We begin by picking a point from $A$ at random (uniformly at random in the case of $\attvalue$, and according to $w$ in Theorem~\ref{thm:attvalue-prior}); next, {\em fixing} the states of all other features, we measure the probability that $i$ can change the outcome, by sampling a different state according to the distribution $w(\cdot \mid \vec a_{-i})$.

Before we prove Theorem~\ref{thm:attvalue-prior}, let us prove the following lemma.
\begin{lemma}
$$\pattvalue(\cal G,w) = 2\sum_{\vec a_{-i} \in A_{-i}} w(\vec a_{-i})w(W_{\vec a_{-i}})w(L_{\vec a_{-i}})$$ 
\end{lemma}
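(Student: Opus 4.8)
The plan is to unfold $\pattvalue(\cal G,w)$ straight from its definition and regroup the outer sum over $\vec a \in A$ according to the projection $\vec a_{-i} \in A_{-i}$. Writing each $\vec a \in A$ uniquely as $(\vec a_{-i},a_i)$ with $a_i \in A_i$,
\begin{align*}
\pattvalue(\cal G,w)
&= \sum_{\vec a \in A} w(\vec a) \sum_{b \in A_i} w(b \mid \vec a_{-i}) \bigl| v(\vec a_{-i},b) - v(\vec a) \bigr| \\
&= \sum_{\vec a_{-i} \in A_{-i}} \ \sum_{a_i \in A_i} w(\vec a_{-i},a_i) \sum_{b \in A_i} w(b\mid\vec a_{-i}) \bigl| v(\vec a_{-i},b) - v(\vec a_{-i},a_i) \bigr|,
\end{align*}
so it suffices to evaluate, for each fixed $\vec a_{-i}$, the inner double sum over the pair $(a_i,b)$.

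The key observation is that, for a fixed $\vec a_{-i}$, the fiber $\{(\vec a_{-i},b)\mid b \in A_i\}$ is partitioned into $W_{\vec a_{-i}}$ (where $v=1$) and $L_{\vec a_{-i}}$ (where $v=0$). Hence $\bigl|v(\vec a_{-i},b) - v(\vec a_{-i},a_i)\bigr|$ equals $1$ exactly when one of $(\vec a_{-i},a_i),(\vec a_{-i},b)$ lies in $W_{\vec a_{-i}}$ and the other in $L_{\vec a_{-i}}$, and is $0$ otherwise. Therefore the inner sum splits into the two ``cross'' terms --- the one with $a_i$ winning and $b$ losing, and the one with $a_i$ losing and $b$ winning. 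After collecting the weights, each cross term factors as (a sum of weights over $W_{\vec a_{-i}}$) times (a sum of weights over $L_{\vec a_{-i}}$), and the two terms are mirror images under swapping the roles of $a_i$ and $b$; this is precisely where the factor of $2$ appears. Reassembling over all $\vec a_{-i} \in A_{-i}$ then yields $2\sum_{\vec a_{-i}} w(\vec a_{-i})\,w(W_{\vec a_{-i}})\,w(L_{\vec a_{-i}})$.

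I do not expect a genuine obstacle here: the argument is entirely structural (partition the fiber, read off the indicator, exploit the $a_i \leftrightarrow b$ symmetry). The one thing requiring care is the bookkeeping of the weight factors --- checking that the outer joint weight $w(\vec a_{-i},a_i)$ together with the conditional factor $w(b\mid\vec a_{-i})$ regroup into the marginal $w(\vec a_{-i})$ multiplied by the conditional weights that define $w(W_{\vec a_{-i}})$ and $w(L_{\vec a_{-i}})$ --- which is a routine manipulation using the definitions of $w(\cdot)$ on subsets, $w(\vec a_{-i})$, and $w(b\mid\vec a_{-i})$ introduced just before the lemma. The only ``trap'' is keeping the marginal/conditional normalizations straight and not counting the two symmetric cross terms twice.
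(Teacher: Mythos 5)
Your proposal is correct and follows essentially the same route as the paper's own proof: expand $\pattvalue$ from its definition, regroup the sum over $\vec a \in A$ by the projection $\vec a_{-i}$, observe that $|v(\vec a_{-i},b)-v(\vec a)|$ is the indicator of a winning/losing cross pair so that the inner double sum splits into two symmetric cross terms (whence the factor $2$), and factor each as $w(\vec a_{-i})\,w(W_{\vec a_{-i}})\,w(L_{\vec a_{-i}})$. The weight bookkeeping you flag (rewriting $w(\vec a)$ as $w(\vec a_{-i})$ times a conditional factor) is exactly the step the paper performs implicitly in its second displayed line, so there is no gap.
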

\begin{proof}
\begin{align*}
\pattvalue(\cal G) =& \sum_{\vec a \in A}w(\vec a)\sum_{b \in A_i}w(b\mid \vec a_{-i})|v(\vec a_{-i},b) - v(\vec a)|\\
& = 2\sum_{\vec a_{-i} \in A}w(\vec a_{-i})\sum_{\substack{c \in A_i:\\v(\vec a_{-i},c) = 0}}\sum_{\substack{b \in A_i:\\v(\vec a_{-i},b) = 1}}w(c\mid \vec a_{-i})w(b\mid \vec a_{-i})\\
& = 2\sum_{\vec a_{-i} \in A}w(\vec a_{-i})\sum_{\substack{c \in A_i:\\v(\vec a_{-i},c) = 0}}w(c\mid \vec a_{-i})w(W_{\vec a_{-i}})\\
& = 2\sum_{\vec a_{-i} \in A}w(\vec a_{-i})w(L_{\vec a_{-i}})w(W_{\vec a_{-i}})\\
\end{align*}
\end{proof}  
\begin{lemma}
Let $f:\R^2 \to \R$ be a function that satisfies
\begin{enumerate}[(i)]
\item $f(x,0) = f(0,y) = 0$.
\item $f(x_1,y) + f(x_2,y) = f(x_1+x_2,y)$.
\item $f(x,y_1)+f(x,y_2) = f(x,y_1+y_2)$.
\end{enumerate}
Then there is some constant $c$ such that $f(x,y) = cxy$.
\end{lemma}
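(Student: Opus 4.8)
The plan is to show first that the conclusion holds whenever the arguments are rational, using only that $f$ is additive in each variable separately, and then to deal with the passage to arbitrary real arguments. The starting point is the observation that fixing $y$ makes $x \mapsto f(x,y)$ a solution of Cauchy's functional equation (by property (ii)) that vanishes at $0$ (by (i)), and symmetrically fixing $x$ makes $y \mapsto f(x,y)$ Cauchy-additive with $f(x,0)=0$.

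I would then invoke the standard chain of consequences for a Cauchy-additive $g$ with $g(0)=0$: an easy induction (of exactly the flavour used in the proof of Theorem~\ref{thm:attvalue}) gives $g(nt)=n\,g(t)$ for $n\in\N$; from $g(0)=0$ one gets $g(-t)=-g(t)$, hence $g(mt)=m\,g(t)$ for $m\in\Z$; and applying this to $t/q$ yields $g(rt)=r\,g(t)$ for every $r\in\Q$. Applying this with $g(x)=f(x,y)$ and $t=1$ gives $f(r,y)=r\,f(1,y)$ for all $r\in\Q$; applying it with $g(y)=f(1,y)$ and $t=1$ gives $f(1,s)=s\,f(1,1)$ for all $s\in\Q$. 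Composing, $f(r,s)=r\,f(1,s)=r s\,f(1,1)$ for all rational $r,s$, so the constant $c:=f(1,1)$ already does the job on $\Q^2$; in particular this settles the combinatorial uses of the lemma (such as in Theorem~\ref{thm:attvalue}), where the arguments are non-negative integers and the induction alone closes the argument.

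The only delicate point is extending from $\Q^2$ to $\R^2$: biadditive functions on $\R^2$ are not forced to have the form $cxy$ without a regularity hypothesis, since Cauchy's equation admits pathological solutions. In every place the lemma is actually used, however, the arguments of $f$ are non-negative --- they are either cardinalities of pairwise disjoint sets of states, or masses such as $w(W_{\vec a_{-i}})$ and $w(L_{\vec a_{-i}})$ built additively from non-negative weights --- so $f$ is monotone nondecreasing in each coordinate on the non-negative orthant. Since a monotone additive function on $\R_{\ge 0}$ is linear, the identities $f(r,y)=r\,f(1,y)$ and $f(1,s)=s\,f(1,1)$ upgrade from rational to real $r,s$, giving $f(x,y)=cxy$ throughout. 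I expect this monotonicity-based real extension to be the main thing requiring care; the $\Q$-homogeneity computation itself is entirely routine.
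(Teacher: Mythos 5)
Your treatment of the rational case is the same as the paper's opening step: both arguments use (ii) and (iii) to get $f(nx,y)=nf(x,y)$, $f(x/n,y)=\tfrac1n f(x,y)$, hence $\Q$-homogeneity in each slot, and both reduce everything to the constant $c=f(1,1)$. The divergence is in the passage from $\Q$ to $\R$. The paper writes $f(rx,y)=\lim_{n\to\infty}f(q_nx,y)$ for rationals $q_n\to r$ and then differentiates $f$ and solves $x\,\frac{\partial f}{\partial x}-f=0$; both of these steps silently assume continuity (indeed differentiability) of $f$, which is not among hypotheses (i)--(iii). You are right that some regularity is indispensable: taking a $\Q$-linear but non-$\R$-linear additive $a:\R\to\R$ (Hamel basis), the function $f(x,y)=a(x)\,y$ satisfies (i)--(iii) and is not of the form $cxy$, so the lemma as literally stated is false and your caution here is better placed than the paper's own proof, which papers over exactly this point.

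That said, your monotonicity-based repair has a wrinkle of its own. Non-negativity of the \emph{arguments} (cardinalities, masses $w(W_{\vec a_{-i}})$, $w(L_{\vec a_{-i}})$) does not by itself make $f$ monotone; what you need is $f\ge 0$ on the non-negative orthant, which together with (ii) then yields monotonicity via $f(x+\delta,y)=f(x,y)+f(\delta,y)\ge f(x,y)$. Non-negativity of the influence measure is nowhere assumed in the paper's axioms (D), (Sym-$w$), (WDU), so strictly speaking your argument, like the paper's, establishes the lemma only under an additional unstated regularity hypothesis --- yours (monotone, equivalently $f\ge0$, on the orthant) being the more defensible one for the intended use in Lemma~\ref{lem:attvalue-cxy} and Theorem~\ref{thm:attvalue-prior}. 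One small mislocation: Theorem~\ref{thm:attvalue} does not invoke this lemma at all --- its induction on $w+l\in\N$ is self-contained --- the lemma is needed only in the weighted setting, where the arguments are real weights rather than integers.
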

\begin{proof}
First, we show that $f(r x,y) = r f(x,y)$ for all $r \in \R$. Given any $n \in \N$, $f(nx,y) = nf(x,y)$ by property (2). Similarly, $f(\frac xn,y) = \frac1nf(x,y)$. Thus, for any rational number $q \in \Q$, we have $f(qx,y) = f(x,qy) = qf(x,y)$. Now, take any real number $r \in \R$. There exists a sequence of rational numbers $(q_n)_{n = 1}^\infty$ such that $\lim_{n \to \infty}q_n = r$. Thus, $f(rx,y) = \lim_{n \to \infty} f(q_nx,y) = \lim_{n \to \infty}q_nf(x,y) = rf(x,y)$ (and similarly $f(x,ry) = rf(x,y)$). 

Let us observe the partial derivatives of $f$ at $x \ne 0$:
\begin{align*}
\der{f}{x}(x^*,y^*) 	&= \lim_{\eps \to 0} \frac{f(x^*+\eps,y^*) - f(x^*,y^*)}{\eps}\\
						&= \lim_{\eps \to 0}\frac{(\frac{x^*+\eps}{x^*} - 1)f(x^*,y^*)}{\eps}= \frac{f(x^*,y^*)}{x^*}
\end{align*}
and similarly $\der{f}{y}(x^*,y^*) = \frac{f(x^*,y^*)}{y^*}$.
We obtain the following differential equation: $x\der fx - f= 0$. Its only solution is $f(x,y) = g(y)x + h(y)$. However, since $f(0,y) = 0$ for all $y$, we get that $h(y) \equiv 0$. Similarly, $f(x,y) = k(x)y$. Putting it all together, we get that $f(x,y) = cxy$. 

\end{proof}
\begin{lemma}\label{lem:attvalue-cxy}
If a value $\phi$ satisfies the (WDU) and (Sym-$w$) property, then it agrees with $\pattvalue$ on any game $\cal G = \tup{\{i\},A_i,v}$ with any weight function $w:A_i \to \R_+$
\end{lemma}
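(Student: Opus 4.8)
The plan is to replay the one-feature step from the proof of Theorem~\ref{thm:attvalue}, with the weight function $w$ now playing the role that the counts of winning and losing states played there, and with the functional-equation lemma above handling the algebra. A single-feature game $\cal G=\tup{\{i\},A_i,v}$ is completely described by the partition of $A_i$ into its winning states $W=\{b\in A_i:v(b)=1\}$ and its losing states $L=\{b\in A_i:v(b)=0\}$; since $A_{-i}$ is a one-point set, the identity for $\pattvalue$ established just above reduces $\pattvalue(\cal G,w)$ to a fixed multiple of $w(W)\,w(L)$. So it suffices to show that $\phi_i(\cal G,w)$ is likewise a constant multiple of $w(W)\,w(L)$.

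The crux is to show that $\phi_i(\cal G,w)$ depends on $w$ only through the two totals $w(W)$ and $w(L)$. Fix the restriction $w|_L$ and enumerate $W=\{b_1,\dots,b_k\}$. Peeling off the winning states one at a time with the winning-side form of (WDU) (the weighted analogue of the second clause of (DU)) gives $\phi_i(\cal G,w)=\sum_{j=1}^{k}\phi_i\big(\cal G,\,w(b_j)\,e_{b_j}+w|_L\big)$, where $e_b$ denotes unit weight on the single state $b$. A permutation of $A_i$ that fixes $L$ pointwise and transposes $b_j$ with a fixed $b_0\in W$ leaves $\cal G$ unchanged, so by (Sym-$w$) it fixes $\phi_i$; hence each summand equals $g\big(w(b_j)\big)$ for one function $g(t)=\phi_i(\cal G,\,t\,e_{b_0}+w|_L)$ independent of $j$. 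Splitting $(s+t)e_{b_0}$ as $s\,e_{b_0}+t\,e_{b_0}$ and invoking the same form of (WDU) once more yields $g(s+t)=g(s)+g(t)$, and, following the rational-to-real limiting convention already used inside the functional-equation lemma, $g(t)=t\,g(1)$. Thus $\phi_i(\cal G,w)=g(1)\cdot w(W)$, where $g(1)=\phi_i(\cal G,e_{b_0}+w|_L)$ depends on $w$ only through $w|_L$. Running the mirror argument with the losing-side form of (WDU) (the analogue of the first clause of (DU)) shows that this residual dependence is through $w(L)$ only, and is linear in it; hence $\phi_i(\cal G,w)$ is a well-defined function $f\big(w(W),w(L)\big)$.

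It remains to feed $f$ to the functional-equation lemma. Property~(ii), $f(x_1,y)+f(x_2,y)=f(x_1{+}x_2,y)$, is the winning-side form of (WDU) applied to two weight functions that agree on $L$ and carry winning totals $x_1$ and $x_2$; property~(iii) is its losing-side counterpart; and $f(x,0)=f(0,y)=0$ follows by applying (WDU) with an empty block: when $W=\emptyset$ or $L=\emptyset$ the forced equality of the two summands makes the combined weight function coincide with each of them, so $2\phi_i=\phi_i$ and the value vanishes (equivalently, $i$ is then a dummy). The functional-equation lemma now gives $f(x,y)=c\,xy$ for a constant $c$, i.e. $\phi_i(\cal G,w)=c\,w(W)\,w(L)$, which is the claimed constant multiple of $\pattvalue(\cal G,w)$.

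The main obstacle is the middle step. In the unweighted setting, state symmetry only had to reconcile datasets with equally many winning states; here the total winning weight can be spread over the winning states along a continuum of distributions, so one genuinely has to combine (WDU), to dismantle $w$ state by state, with (Sym-$w$), to recognize that only the totals survive, and then push from rational to real scalars in Cauchy fashion. As already in the functional-equation lemma, this last passage tacitly uses continuity of $\phi$ in the weights; making that hypothesis explicit (or restricting to rational weights) is the sole soft spot in the argument.
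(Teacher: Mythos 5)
Your proof is correct (up to the same regularity caveat discussed below) but it is organized differently from the paper's. The paper's proof is atomic: by (Sym-$w$) the value of the single-feature game depends only on the weight tuples $\left(w(b)\right)_{b\in W_i}$ and $\left(w(b)\right)_{b\in L_i}$; (WDU) is then used to decompose $\phi_i$ into a double sum $\sum_{b\in W_i}\sum_{c\in L_i}\phi_i(w(b),w(c))$ over one-winning-state/one-losing-state pieces, the functional-equation lemma ($f(x,y)=cxy$) is applied to each such two-state piece, and resumming gives $c\,w(W_i)\,w(L_i)=c\cdot\frac12\pattvalue_i(\cal G,w)$. You instead never form pairwise atoms: you peel the winning states one at a time with the winning-side form of (WDU), use explicit (Sym-$w$) transpositions to collapse all the resulting summands into a single Cauchy function $g$, deduce linearity in $w(W)$, mirror the argument on the losing side, and only then feed the aggregated function $f(w(W),w(L))$ to the functional-equation lemma. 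Both routes rest on the same three ingredients (weight-additivity from WDU, identity-independence from Sym-$w$, and the $cxy$ lemma); the paper's is shorter, while yours makes the symmetry step concrete and --- a genuine improvement --- derives the boundary condition $f(x,0)=f(0,y)=0$ from (WDU) alone via the $2\phi_i=\phi_i$ trick with a zero-weight block. The paper needs exactly this condition (property (i) of its functional-equation lemma) but never addresses it, and the dummy axiom is not among the hypotheses of this lemma, so your derivation closes a small gap rather than introducing one. Two shared soft spots, both of which you correctly flag or inherit from the paper: the constructions (yours and the paper's) require weight functions that vanish on some states, so $\R_+$ must include $0$; and the passage from $\Q$-homogeneity to $\R$-homogeneity in the Cauchy step tacitly assumes continuity of $\phi$ in the weights, exactly as in the paper's own proof of the functional-equation lemma. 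Finally, like the paper, you really prove $\phi_i=c\,\pattvalue_i$ for some constant $c$ rather than literal equality, which is consistent with how the lemma is invoked in Theorem~\ref{thm:attvalue-prior}.
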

\begin{proof}
Let us write $W_i$ and $L_i$ to be the winning and losing states in $A_i$. By state symmetry we know that $\phi$ is only a function of $\left(w(b)\right)_{b \in W_i}$ and $\left(w(b)\right)_{b \in L_i}$. By the weighted disjoint union property, we know that 
$$\phi_i(\left(w(b)\right)_{b \in W_i},\left(w(b)\right)_{b \in L_i}) = \sum_{b \in W_i}\sum_{c \in L_i}\phi_i(w(b),w(c)).$$
Using the (WDU) property, we know that the following holds for single-feature games with only two states. Given $x_1,x_2,y \in \R_+$, the following holds:
\begin{align*}
\phi_i(x_1+x_2,y) =& \phi_i(x_1,y) + \phi_i(x_2,y)\\
\phi_i(y,x_1+x_2) =& \phi_i(y,x_1) + \phi_i(y,x_2)
\end{align*}
By Lemma~\ref{lem:attvalue-cxy}, we know that $\phi_i(x,y) = cxy = c\pattvalue_i(x,y)$. In particular, this implies that $\phi_i(\cal G,w) = \pattvalue_i(\cal G,w)$, and we are done. 
\end{proof}

\begin{proof}[Proof of Theorem~\ref{thm:attvalue-prior}]
First, we note that $\pattvalue$ satisfies (D), (Sym-$w$) and (WDU) (this is an easy exercise). We write $W$ to be the winning state vectors in $A$ and $L$ to be the losing state vectors in $A$. Now, if either $w(W) = 0$ or $w(L) = 0$, any influence measure that satisfies (D) assigns a value of zero to all $i \in N$, and the claim trivially holds. Thus, we assume that $w(W),w(L) > 0$. 

Next, according to the (DU) property, we can write 
$$\phi_i(W,L,w) = \sum_{\vec a_{-i} \in A_{-i}}\phi_i(W_{\vec a_{-i}},L_{\vec a_{-i}},w).$$
The argument is the same as the one used for the decomposition of $\attvalue$ in Theorem~\ref{thm:attvalue}. By the above lemmas, $\phi_i(W_{\vec a_{-i}},L_{\vec a_{-i}},w) = C\pattvalue_i(W_{\vec a_{-i}},L_{\vec a_{-i}},w)$. Note that by feature symmetry, it must be the case that the constant $C$ is independent of $i$. 
\end{proof}

\section{Generalized Distance Measures}\label{sec:distancemeasures}
Suppose that we have a set of feature vectors $B\subseteq A$. In previous sections we had assumed that there was some function $v:A \to \{0,1\}$ that classified a vector as either having a value of 0 or a value of 1. We then proceeded to provide an axiomatic characterization of influence measures in such settings. Influence was largely based on the following notion: a feature $i \in N$ can influence the vector $\vec a \in A$, if $|v(\vec a_{-i},b) - v(\vec a)| = 1$. 
Let us now consider a more general setting; instead of defining a classifier over data points, we have some semi-distance measure over the vectors. Recall that a pseudo-distance measure is a function $d:A\times A \to \R$ that satisfies all of the distance axioms, but $d(\vec a,\vec b) = 0$ does not necessarily imply that $\vec a = \vec b$. Given some pseudo-distance measure $d$ over $A$, rather than measuring influence by the measure $|v(\vec a_{-i},b) - v(\vec a)|$, we measure influence by $d((\vec a_{-i},b),\vec a)$. 

We observe that if $d(\vec a,\vec b) \in \{0,1\}$ for all $\vec a,\vec b \in A$, then we revert to the original setting. 

Given a pseudo-distance measure $d$ over $A$ and a dataset $B \subseteq A$, let us define $\cal P_d(B)$ to be the partition of $B$ into the equivalence classes defined by $\vec a \sim \vec b$ iff $d(\vec a,\vec b) = 0$. In other words, $\cal P_d(B)$ is the clustering of $B$ into points that are of equal distance to each other. 
Fixing a pseudo-distance $d$, we provide the following extensions of the axioms defined in Section~\ref{sec:featureinfluence}. 

We keep the notion of symmetry used in Section~\ref{sec:featureinfluence} (Definition~\ref{def:symmetry}): an influence measure satisfies symmetry if it is invariant under coordinate permutations, both for individual features (e.g. renaming males to females and vice versa should not change the influence of any feature), and between the features (e.g. renaming gender and age should not change feature influence). We do, however, adopt more general definitions of the dummy and disjoint union properties. 
\begin{definition}[$d$-Dummy]
We say that an influence measure satisfies the $d$-Dummy property if $\phi_i(B) = 0$ whenever $d((\vec a_{-i},b),\vec a)= 0$ for all $\vec a \in B$ and all $b \in A_i$ such that $(\vec a_{-i},b) \in B$.
\end{definition}
\begin{definition}[Feature Independence]
Let $B\subseteq A$ be a dataset, and let $B(\vec a_{-i}) = \{\vec b \in B\mid \vec b_{-i} = \vec a_{-i}\}$. An influence measure satisfies feature independence (FD) if 
$$\phi_i(B) = \sum_{\vec a_{-i} \in A_{-i}}\phi_i(B(\vec a_{-i})).$$
\end{definition}
\begin{definition}[$d$-Disjoint Union]
Let $B\subseteq A$ be a dataset, and let $\cal B = \{B_1,\dots,B_m\}$ be the equivalence classes of $B$ according to the pseudo-distance $d$. An influence measure $\phi$ satisfies the $d$-disjoint union, if for any $j \in \{1,\dots,m\}$, any partition $C,C'$ of $B_j$ satisifies
$$\phi_i(B_1,\dots,B_m) = \phi_i(\cal B_{-j},C) + \phi_i(\cal B_{-j},C') - \phi_i(\cal B_{-j}).$$
\end{definition}
Finally, the following axiom requires that in very minimal settings, a feature's influence should agree with $d$. 
\begin{definition}[Agreement with Distance]
%An influence measure agree with a pseudo-distance $d$ if for any dataset that consists of two points $B = \{(\vec a_{-i},b),(\vec a_{-i},b')\}$, there is some constant independent of $i$ such that, $\phi_i(B) = d((\vec a_{-i},b),(\vec a_{-i},b'))$.
\end{definition}
Given a dataset $B \subseteq A$, define 
\begin{align}
\attvalue_i^d(B) = \sum_{\vec a \in B}\sum_{b \in A_i: (\vec a_{-i},b) \in B} d((\vec a_{-i},b),\vec a)
\end{align}

\begin{lemma}\label{lem:single-feature-d-influence}
Let $B$ be a dataset of single-feature points. Then if $\phi$ satisfies, $d$-(D), $d$-(DU), (Sym), and (AD), then $\phi(B) = \attvalue^d(B)$ 
\end{lemma}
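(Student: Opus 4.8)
The plan is to specialize the single-feature analysis from the proof of Theorem~\ref{thm:attvalue}, with the pseudo-distance $d$ playing the role that the winning/losing partition played there. Since $N=\{i\}$, a dataset is just a subset $B\subseteq A_i$, and
$$\attvalue^d(B)=\sum_{\vec a\in B}\sum_{b\in B}d(b,\vec a)=2\sum_{\{\vec a,b\}\subseteq B}d(\vec a,b),$$
which is manifestly linear in $d$. First I would dispose of the degenerate case: if $d(\vec a,b)=0$ for all $\vec a,b\in B$ then feature $i$ is a $d$-dummy, so $d$-(D) forces $\phi(B)=0=\attvalue^d(B)$; in particular $\phi$ vanishes on every singleton dataset.

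Next I would reduce the value of $\phi$ on an arbitrary single-feature dataset to its values on two-point datasets, using two ingredients. By (Sym): relabeling the states of $i$ by a bijection $\tau:A_i\to A_i$ leaves $\phi$ unchanged, and since $d$ is symmetric such a relabeling permutes the points of a dataset while preserving all pairwise distances; hence $\phi(B)$ can depend only on the multiset $\{d(\vec a,b):\{\vec a,b\}\subseteq B\}$ of pairwise distances (for a two-point dataset, only on the single number $d(\vec a,b)$). By $d$-(DU) together with $d$-(D) and (AD): peeling points off $B$ one at a time --- grouping $B$ into its $d$-equivalence classes, splitting off singletons within a class via $d$-(DU), discarding the zero-distance ``diagonal'' contributions via $d$-(D), and recombining the resulting one- and two-point pieces via (AD) --- yields the additive form $\phi(B)=\sum_{\{\vec a,b\}\subseteq B}\phi(\{\vec a,b\})$. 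The agreement requirement then pins $\phi(\{\vec a,b\})=\alpha\,d(\vec a,b)$ for a constant $\alpha$ independent of the feature, so $\phi(B)=\alpha\sum_{\{\vec a,b\}\subseteq B}d(\vec a,b)=\tfrac\alpha2\attvalue^d(B)$, i.e.\ $\phi(B)=\attvalue^d(B)$ after the normalization (or, up to a constant).

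The step I expect to be the main obstacle is the reduction to the pairwise-additive form. The $d$-(DU) axiom only re-partitions a single $d$-equivalence class at a time, so when $d$ is an honest metric all classes are singletons and $d$-(DU) says nothing; the work of combining points in distinct classes has to be done by (AD) (and (Sym)). One clean way to organize it is to use additivity across distances to write $d=\sum_k\lambda_k\delta_{S_k}$ as a non-negative combination of cut pseudo-distances $\delta_S(\vec a,b)=\I(|S\cap\{\vec a,b\}|=1)$ --- possible precisely when $d$ embeds isometrically into $\ell_1$, which holds in the cases of interest --- observe that each $(B,\delta_S)$ is literally a binary-classifier single-feature dataset (set $v(\vec a)=\I(\vec a\in S)$), so that the single-feature case of Theorem~\ref{thm:attvalue} applies to $(B,\delta_S)$ verbatim, and then recombine using linearity of $\attvalue^{(\cdot)}$ in its distance argument. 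Checking that $\attvalue^d$ itself satisfies all four axioms, and tracking the constant $\alpha$, is routine and I would defer it to the end.
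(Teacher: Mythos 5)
There is a genuine gap, and it sits exactly where you flagged it. Your reduction to the pairwise-additive form $\phi(B)=\sum_{\{\vec a,b\}\subseteq B}\phi(\{\vec a,b\})$ is not derived from the axioms, and the rescue you propose does not stay within the lemma's hypotheses. First, the cut decomposition $d=\sum_k\lambda_k\delta_{S_k}$ exists precisely for $\ell_1$-embeddable pseudo-metrics, and finite pseudo-metrics need not be $\ell_1$-embeddable (e.g.\ the shortest-path metric of $K_{2,3}$), so your argument proves the lemma only for a restricted class of $d$, whereas the statement is for an arbitrary pseudo-distance. Second, the recombination step needs $\phi$ to be linear in the distance function $d$ (and, for real coefficients $\lambda_k$, homogeneous over $\R$), but no such axiom is available: the paper's (AD) of Definition~\ref{def:additivity} is additivity in the valuation $v$, which does not even exist in the distance setting, and the ``(AD)'' in the lemma is best read as the \emph{agreement with distance} axiom of Section~\ref{sec:distancemeasures3} (the theorem that invokes the lemma lists exactly $d$-dummy, $d$-disjoint union, symmetry and agreement with distance, with no additivity). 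So the key combinatorial step --- combining points lying in distinct $d$-equivalence classes --- is left unsupported.

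The paper's own proof avoids both problems by exploiting the triangle inequality rather than an embedding: since $d(\vec a,\vec b)$ depends only on the zero-distance equivalence classes of $\vec a$ and $\vec b$, one partitions $B$ into classes $B_1,\dots,B_m$; state symmetry (as in Lemma~\ref{lem:symmetry-dependence-on-winning-states}) shows $\phi$ is a function of the class sizes $w_j=|B_j|$ (for the fixed quotient metric); the $d$-disjoint union property, which splits \emph{within} a class, together with the $d$-dummy property yields a bilinear form $\phi(B)=\vec w^{T}D'\vec w$ with $D'$ symmetric, non-negative and zero on the diagonal; finally, agreement with distance applied to datasets supported on just two classes identifies $D'_{jk}$ with $\alpha\,d(B_j,B_k)$, giving $\phi(B)=\alpha\sum_{\vec a,\vec b\in B}d(\vec a,\vec b)$, i.e.\ $\attvalue^d$ up to the constant. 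Your observation that $d$-(DU) is vacuous when all classes are singletons is a fair criticism of the axiom's strength, but the paper's route does not try to peel off individual points of a genuine metric space; the cross-class structure is pinned by symmetry plus the agreement axiom acting on the matrix $D'$, not by additivity in $d$. (Also minor: your conclusion lands at $\phi=(\alpha/2)\attvalue^d$, so the claimed equality only holds after fixing the normalization, and by Sym alone a single-feature $\phi$ is an invariant of the isomorphism type of $(B,d|_B)$, not merely of the multiset of pairwise distances --- harmless for two-point datasets, but not in general.)
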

\begin{proof}[Proof Sketch]
We partition $B$ into its equivalence classes according to $d$, $\cal B = \{B_1,\dots,B_m\}$. In an argument similar to Lemma~\ref{lem:symmetry-dependence-on-winning-states}, we can show that the symmetry axiom implies that $\phi$ is a function of $|B_1|,\dots,|B_m|$. Let $w_j = |B_j|$; employing the $d$-disjoint union property and the dummy property, we obtain that there exists some $m \times m$ matrix $D'$ such that $\phi(B) = \vec w^TD'\vec w$, and $D'$ is 0 on the diagonal, non-negative, and symmetric (symmetry here is obtained via state symmetry).

To show that $D'$ must identify with the pseudo-distance, we employ the agreement with distance axiom on inputs to $\phi$ that have only two non-zero coordinates, to obtain the desired result.    
\end{proof}

\begin{theorem}
If an influence measure $\phi$ satisfies the $d$-dummy, $d$-disjoint union, symmetry and agreement with distance axioms, then 
$$\phi_i(B) = \alpha\sum_{\vec a \in B}\sum_{b \in A_i:(\vec a_{-i},b) \in B} d((\vec a_{-i},b),\vec a),$$
where $\alpha$ is a constant independent of $i$.
\end{theorem}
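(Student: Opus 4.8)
The plan is to follow the template of the proof of Theorem~\ref{thm:attvalue}, replacing the $\{0,1\}$-valued bookkeeping with the pseudo-distance $d$. First I would dispense with the easy direction: verifying that $\attvalue^d$ satisfies the $d$-dummy, $d$-disjoint union, symmetry and agreement-with-distance axioms is routine, since each summand $d((\vec a_{-i},b),\vec a)$ vanishes exactly when $i$ is a $d$-dummy, the double sum splits over the fibers $B(\vec a_{-i})$, and splitting an equivalence class of $B$ splits the corresponding partial sums. So the real work is in the ``only if'' direction.

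For uniqueness, the first step is to reduce to single-feature datasets by deriving feature independence from the $d$-disjoint union and $d$-dummy axioms, in exact analogy with how Equations~\eqref{eq:state-additivity}--\eqref{eq:state-additivity2} follow from (DU) and (D) in Theorem~\ref{thm:attvalue}. Concretely: fix $i$, and within each equivalence class $B_j$ of $B$ under $d$, partition $B_j$ according to which fibers it meets, $B_j=\bigcup_{\vec a_{-i}\in A_{-i}}(B_j\cap B(\vec a_{-i}))$; then apply the $d$-disjoint union identity repeatedly to expand $\phi_i(B_1,\dots,B_m)$ into an inclusion--exclusion sum of $\phi_i$ evaluated on tuples that pick, in each class, either a single fiber-piece or the whole (possibly empty) class. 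Any term that mixes pieces coming from two distinct fibers $\vec a_{-i}\neq\vec a_{-i}'$ makes feature $i$ a $d$-dummy on the resulting dataset --- no observed pair $\vec a,(\vec a_{-i},b)$ with $(\vec a_{-i},b)$ also present can straddle the mix --- so that term is killed by the $d$-dummy axiom, and the survivors collapse to $\phi_i(B)=\sum_{\vec a_{-i}\in A_{-i}}\phi_i(B(\vec a_{-i}))$.

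Next, each $B(\vec a_{-i})$ is a single-feature dataset over feature $i$ with state space $A_i$, so Lemma~\ref{lem:single-feature-d-influence} applies and identifies $\phi_i(B(\vec a_{-i}))$ with $\alpha_i\,\attvalue_i^d(B(\vec a_{-i}))$, where $\alpha_i$ is the scalar extracted from the two-point instance $B=\{\vec a,(\vec a_{-i},b)\}$ of the agreement-with-distance axiom. Summing over $\vec a_{-i}$ and using the definition of $\attvalue_i^d$ yields $\phi_i(B)=\alpha_i\,\attvalue_i^d(B)$. Finally, to see that $\alpha_i$ does not depend on $i$, let $\sigma:N\to N$ transpose $i$ and $j$; feature symmetry gives $\phi_i(B)=\phi_{\sigma(i)}(\sigma B)$, and since $\attvalue_j^d(\sigma B)=\attvalue_i^d(B)$ we obtain $\alpha_i\attvalue_i^d(B)=\alpha_j\attvalue_i^d(B)$ for every $B$, hence $\alpha_i=\alpha_j=:\alpha$, the claimed constant independent of $i$.

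I expect the main obstacle to be the reduction in the second paragraph: one has to check that iterating the $d$-disjoint union identity over the finitely many equivalence classes produces precisely the inclusion--exclusion pattern needed, and that every mixed-fiber term genuinely makes $i$ a $d$-dummy (this is where $d(\vec a,\vec a)=0$ and the symmetry of $d$ are used, and where one must be careful that equivalence classes that straddle fibers are handled consistently). A secondary point, already absorbed into Lemma~\ref{lem:single-feature-d-influence}, is arguing that the symmetric, nonnegative, zero-diagonal matrix $D'$ produced by the single-feature analysis must coincide with $d$ on equivalence-class representatives up to one scalar --- this is exactly what the agreement-with-distance axiom delivers, and it is what forces the single coefficient $\alpha$ rather than an arbitrary bilinear form.
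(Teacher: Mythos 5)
Your overall architecture is the same as the paper's: reduce to the single-feature datasets $B(\vec a_{-i})$, invoke Lemma~\ref{lem:single-feature-d-influence} on each fiber, and use feature symmetry to make the per-feature constant uniform. Those last two steps are fine. The difference — and the problem — is your second paragraph. The paper does \emph{not} derive the fiber decomposition from $d$-(DU) and $d$-(D); it assumes it, as the separately stated Feature Independence axiom, and its proof sketch explicitly says ``using the (FI) property, we decompose $\phi_i^d$ into $|A_{-i}|$ single-feature datasets'' (the omission of (FI) from the theorem's hypothesis list is an inconsistency in the paper, but the intended proof clearly uses it as an axiom). You instead claim that iterating the $d$-disjoint union identity over the equivalence classes produces an inclusion--exclusion expansion in which every mixed-fiber term is killed by $d$-dummy, so that $\phi_i(B)=\sum_{\vec a_{-i}}\phi_i(B(\vec a_{-i}))$ falls out. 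That claim is where the gap is.

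The collapse works when there are exactly two equivalence classes (which is why the analogous step in Theorem~\ref{thm:attvalue} goes through: ``winning'' and ``losing'' are the only classes), but it fails already with three classes and two fibers. Expanding $\phi_i(B_1,B_2,B_3)$ class by class leaves, besides the desired full-fiber terms, pairs such as $+\phi_i(B_1^x,B_2^x,B_3^y)-\phi_i(B_1^x,B_2^x,\emptyset)$, where $B_j^x=B_j\cap B(\vec x_{-i})$: in both datasets feature $i$ is \emph{not} a $d$-dummy (the fiber $\vec x_{-i}$ contains points of two distinct classes, so pairs at positive distance exist), so neither term vanishes, and they cancel only if one knows that the isolated points $B_3^y$ — which pair with nothing in their fiber — can be deleted without changing $\phi_i$. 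Nothing in $d$-(D), $d$-(DU), (Sym) or agreement-with-distance gives you that deletion property (agreement-with-distance only constrains two-point datasets, and symmetry only relabels); indeed, the statement ``points that cannot pair are irrelevant'' is essentially the content of the Feature Independence axiom itself, so your induction has no base case independent of what you are trying to prove. To repair the argument you should either include (FI) among the hypotheses, as the paper's own proof does, or add an explicit axiom (or a genuinely new argument) allowing removal of non-pairing points; as written, the reduction you flagged as the ``main obstacle'' is not merely delicate — it does not follow from the four axioms you use.
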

\begin{proof}[Proof Sketch]
The proof mostly follows the proof technique of Theorem~\ref{thm:attvalue}. 
Let us write the influence of $i$ under $d$ to be $\phi_i^d(A)$. 

Using the (FI) property, we decompose $\phi_i^d$ into $|A_{-i}|$ different single-feature datasets. Next, we apply Lemma~\ref{lem:single-feature-d-influence} on each of the datasets to show that identity holds.
\end{proof}
\fi

\end{document}